\def \VersionLong {}
\def \VersionAuthor {}
	\newcommand{\AuthorVersion}[1]{#1}
	\newcommand{\FinalVersion}[1]{}
	\newcommand{\AuthorVersion}[1]{}
	\newcommand{\FinalVersion}[1]{#1}
\def\@doi#1{\href{https://doi.org/#1}
       {\ttfamily https://doi.org/#1}\egroup}}
\def\@doi#1{\ttfamily https://doi.org/#1\egroup}}
   \def\doi{\bgroup\catcode`\_=12\relax\@doi}}
\def\@biblabel#1{[#1]}
\definecolor{USPNcobalt}{HTML}{293358}
\definecolor{USPNocre}{HTML}{8b7d6d}
\definecolor{USPNblanc}{HTML}{ffffff}
\definecolor{USPNceruleen}{HTML}{354878}
\definecolor{USPNsable}{HTML}{ad947e}
\newcommand{\inlineAlgoStyleKW}[1]{\textbf{#1}}
\newcommand{\xmark}{\ding{55}}%
\newlist{ienumerate}{enumerate*}{1}
\setlist[ienumerate]{label=\textit{\roman*})}
\newlist{oneenumerate}{enumerate*}{1}
\setlist[oneenumerate]{label=\arabic*)}
	\newcommand{\LongVersion}[1]{#1}
	\newcommand{\ShortVersion}[1]{}
	\newcommand{\LongVersion}[1]{}
	\newcommand{\ShortVersion}[1]{#1}
\footnotesize\printfield{doi}}
\crefname{line}{\text{line}}{\text{lines}} %
\newcommand{\fakeParagraph}[1]{\paragraph{#1}}
\tikzstyle{NFA}=[auto, ->, >=stealth']
\tikzstyle{every node}=[initial text=]
\tikzstyle{state}=[circle, minimum size=12pt, draw=black, fill=blue!10, inner sep=1pt]
\tikzstyle{NFApath}=[-stealth, thick]
\tikzstyle{final}=[accepting, fill=blue!50]
\definecolor{coloract}{rgb}{0, 0.3, 0}
\definecolor{colorloc}{rgb}{0.4, 0.4, 0.65}
\newcommand{\styleact}[1]{\ensuremath{\textcolor{coloract}{{\mathit{#1}}}}}
\newcommand{\styleskippedact}[1]{\ensuremath{\textcolor{gray}{{\mathit{#1}}}}}
\newcommand{\styleactP}[1]{\ensuremath{\textcolor{red!50!black}{{\mathit{#1}}}}}
\newcommand{\styleactQ}[1]{\ensuremath{\textcolor{green!50!black}{{\mathit{#1}}}}}
\newcommand{\defProblem}[3]
{%
\noindent\fcolorbox{black}{USPNsable!15}{
	\begin{minipage}{.95\columnwidth}
		\textbf{#1 problem:}\\
		\textsc{Input}: #2\\
		\textsc{Output}: #3
	\end{minipage}
}

	\smallskip

}
\tikzstyle{rqanswer} = [
\newcommand{\cellTimeout}{{\color{red}\textbf{T/O}}}
\newcommand{\init}{_0}
\newcommand{\setN}{{\mathbb N}}
\newcommand{\setNplus}{\setN_{+}}
\newcommand{\powerset}[1]{\mathcal{P} (#1)}
\newcommand{\partfun}{\nrightarrow}
\newcommand{\domain}{\mathrm{dom}}
\newcommand{\projectsymbol}{\pi}
\newcommand{\project}[2]{\projectsymbol({#1}, {#2})}
\newcommand{\cardinality}[1]{\ensuremath{\lvert #1 \rvert}}
\newcommand{\tuple}[1]{\ensuremath{\langle #1 \rangle}}
\newcommand{\complexityClass}{\ensuremath{\mathcal{O}}}
\newcommand{\AP}{\mathit{AP}}
\newcommand{\prop}[1][]{p_{#1}}
\newcommand{\fml}[1][]{\varphi_{#1}}
\newcommand{\extendedword}[1][]{\overline{w}^{#1}}
\newcommand{\word}[1][]{w^{#1}}
\newcommand{\wordi}[2][]{w^{#1}_{#2}}
\newcommand{\worda}[1][]{v^{#1}} %
\newcommand{\wordb}[1][]{u^{#1}} %
\newcommand{\wordSet}{\mathbf{w}}
\newcommand{\startend}{\ensuremath{\$}} %
\newcommand{\slice}[3]{{#1}|_{[{#2}, {#3}]}}
\newcommand{\Alphabet}[1][]{\Sigma}
\newcommand{\emptyword}{\varepsilon}
\newcommand{\NFA}{\mathcal{A}}
\newcommand{\loc}{s}
\newcommand{\Loc}{S}
\newcommand{\InitLoc}{\Loc_{0}}
\newcommand{\initLoc}{\loc_{\init}}
\newcommand{\Final}{F}
\newcommand{\FinalLoc}{\Loc_{\Final}}
\newcommand{\Edges}{\Delta}
\newcommand{\Lg}{\mathcal{L}}
\newcommand{\silentaction}{\ensuremath{\varepsilon}}
\newcommand{\action}{\ensuremath{\sigma}}
\newcommand{\loci}[1]{\ensuremath{\loc_{#1}}}
\newcommand{\locfinal}{\ensuremath{\loci{f}}}
\newcommand{\asNFA}[1]{\overline{#1}}
\newcommand{\NAA}{\mathsf{A}}
\newcommand{\hyperLg}{\mathfrak{L}}
\newcommand{\Vars}{K}
\newcommand{\var}[1][l]{{#1}}
\newcommand{\assign}[3]{#1[#2 \leftarrow #3]} %
\newcommand{\Match}{\mathcal{M}}
\newcommand{\waitingQueue}{\mathcal{Q}}
\newcommand{\currentConfigurations}{\mathcal{C}}
\newcommand{\reachedLoc}{\mathcal{R}}
\newcommand{\SkipQS}{\Delta_{\mathrm{QS}}}
\newcommand{\LastQS}[1]{\Lambda_{\mathrm{QS}}^{#1}}
\newcommand{\ShortestMatching}[1]{\mathcal{SM}^{#1}}
\newcommand{\SkipKMP}{\Delta_{\mathrm{KMP}}}
\newcommand{\configuration}{c}
\newcommand{\stylealgo}[1]{\ensuremath{\mathcal{#1}}}
\newcommand{\algoPM}{\ensuremath{\stylealgo{PM}}}
\newcommand{\algoHPM}{\ensuremath{\stylealgo{HPM}}}
\newcommand{\algoHPMFJS}{\ensuremath{\algoHPM^{\mathit{FJS}}}}
\newcommand{\algoHPMP}{\ensuremath{\stylealgo{HPM}_P}}
\newcommand{\algoHPMFJSP}{\ensuremath{\stylealgo{HPM}^{\mathit{FJS}}_P}}
\newcommand{\pattern}{\word_{p}}
\newcommand{\target}{\word}
\newcommand{\stylecomplexity}[1]{\textsf{#1}}
\newcommand{\NP}[1]{\stylecomplexity{NP}}
\newcommand{\NPSPACE}[1]{\stylecomplexity{NPSPACE}}
\newcommand{\PSPACE}[1]{\stylecomplexity{PSPACE}}
\newcommand{\ourTool}{\textsc{HypPAu}}
\newcommand{\interference}{\textsc{Interference}}
\newcommand{\robustness}{\textsc{Robustness}}
\newcommand{\packetPair}{\textsc{PacketPairs}}
\newcommand{\manydimensions}{\textsc{ManyDirs}}
\newcommand{\DFA}{\mathcal{A}}
\newcommand{\eg}{e.g.,\xspace}
\newcommand{\ie}{i.e.,\xspace}
\newcommand{\st}{s.t.\xspace}
\newcommand{\wrtwithspace}{w.r.t.~} %
\newcommand{\resp}{resp.\xspace}
\renewcommand{\orcidID}[1]{\orcidlink{#1}} 
\title{Hyper pattern matching} %
 \author{
 \ifdefined\VersionAnonymous%
 \else
 Masaki Waga\orcidID{0000-0001-9360-7490}\inst{1,2}\and 
 Étienne André\orcidID{0000-0001-8473-9555}\inst{3,4}
 \LongVersion{\thanks{%
    This is the author (and extended) version of the manuscript of the same name published in the proceedings of the 25th International Conference on Runtime Verification (RV 2025).
    The final version is available at \url{www.springer.com}.
    }%
}
\fi
}
\institute{%
\ifdefined\VersionAnonymous%
\else
 Graduate School of Informatics, Kyoto University, Kyoto, Japan
\and
 National Institute of Informatics, Tokyo, Japan
\and
 Université Sorbonne Paris Nord, CNRS, Laboratoire d'Informatique de Paris Nord, LIPN, F-93430 Villetaneuse, France %
\and
 Institut universitaire de France (IUF)
\fi
}
\begin{document}
\sloppy

\maketitle{}
\setcounter{footnote}{0}

\begin{abstract}
    In runtime verification, pattern matching, which searches for occurrences of a specific pattern within a word, provides more information than a simple violation detection of the monitored property, by locating concrete evidence of the violation.
    However, witnessing violations of some properties, particularly \emph{hyperproperties}, requires evidence across multiple input words or different parts of the same word, which goes beyond the scope of conventional pattern matching.
    We propose here \emph{hyper pattern matching}, a generalization of pattern matching over a set of words.
    Properties of interest include robustness and (non-)interference.
    As a formalism for patterns,
        we use nondeterministic asynchronous finite automata (NAAs).
    We first provide a naive algorithm for hyper pattern matching and then devise several heuristics for better efficiency.
    Although we prove the \NP{}-completeness of the problem, our implementation \ourTool{} is able to address several case studies scalable in the length, number of words (or logs) and number of dimensions, suggesting the practical relevance of our approach.
	\keywords{runtime verification \and hyperproperties \and pattern matching.}
\end{abstract}
\section{Introduction}\label{section:introduction}
Runtime verification is a lightweight formal method that focuses on monitoring and analyzing system executions (or logs) to ensure they comply with desired specifications.
Pattern matching consists of searching for occurrences of a specific \emph{pattern} (such as a sequence of symbols or a regular expression) within a word or log.
Many important system requirements, such as noninterference, symmetry, and information flow control, cannot be expressed as trace properties alone:
witnessing violations of such requirements requires evidence across \emph{multiple} input words or \emph{different parts} of the same word---which goes beyond the scope of conventional pattern matching.
For example, one may want to detect occurrences of $n$ ``$a$''s in one word \emph{and} the same number~$n$ of ``$b$''s in another word.
Or, when monitoring certain activities of a network bus, one may detect sequences of packets that are of the same size but serve two different purposes (\eg{} requests and responses) and may potentially be interwoven.

To this end, we introduce \emph{hyper pattern matching}, the process of searching for occurrences of patterns involving multiple words (or multiple portions of the same words) within a set of words (or logs).
Hyper pattern matching can extract\LongVersion{ concrete} evidence of violation of \emph{hyperproperties}~\cite{FHST19},
a generalization of trace properties that describe \emph{sets of sets} of execution traces, rather than\LongVersion{ just} sets of traces.

To represent patterns in hyper pattern matching,
    we use nondeterministic asynchronous finite automata (NAAs)~\cite{GMO21} as an extension of finite-state automata with ``directions'', that are assigned words from a set of words.
We show that NAAs are rich enough to represent violations of interesting security properties, \eg{} noninterference and robustness.
\begin{example}[counting]\label{example:intro:counting}
	Consider the NAA in \cref{figure:example}, with two \emph{directions} $\var[1]$ and~$\var[2]$; directions are assigned a (sub)word, and can be seen informally as variables ``reading'' letters in a given subword\LongVersion{ (see \cref{section:preliminaries} for a formal definition)}.
	This NAA defines pairs of words such that both words start with a ``$\startend$'' (transition from $\loci{0}$ to~$\loci{1}$ for the first word, and from $\loci{1}$ to~$\loci{2}$ for the second word), followed by the same number of ``$a$''s in the first word as of~``$b$''s in the second word (loop over $\loci{2}$ and~$\loci{3}$).
	Finally, the first word must end with a~``$\startend$''.
	Let us consider pattern matching, with an input singleton word $\wordSet = \{ \word \}$,
	with $\word = d\startend{}aa\startend{}bbb\startend{}aaa\startend{}$.
	The match set $\Match(\NAA, \wordSet)$ is
	\(\big\{ \tuple{ (\word, 2, 5 ) , (\word, 5, 7 )} , \tuple{ (\word, 9, 13 ) , (\word, 5, 8 )} \big\} \), \ie{} two pairs of two subwords, where $(\word, 2, 5 )$ denotes the subword made of the 2nd to the 5th letter of~$\word$.
\end{example}

\newcommand{\exNIone}{\ensuremath{\top}}
\newcommand{\exNIzero}{\ensuremath{\bot}}

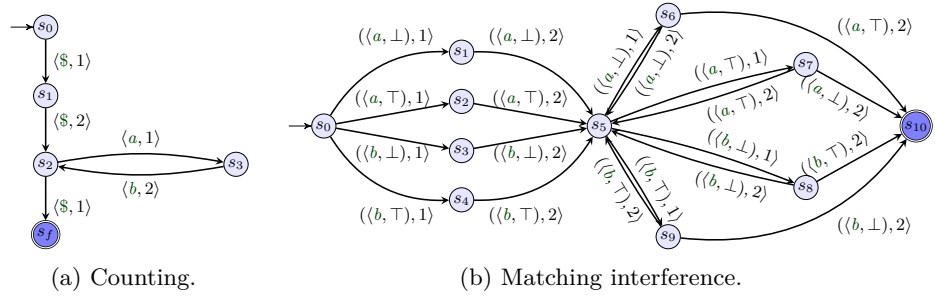
\begin{figure}[tb]
	\centering
	\begin{subfigure}[b]{0.27\textwidth}
        \scalebox{0.725}{\begin{tikzpicture}[NFA, scale=1, yscale=1, node distance=0.8cm]
        \node[state, initial] (init) {$\loci{0}$};
        \node[state, below=of init] (p1) {$\loci{1}$};
        \node[state, below=of p1] (p2) {$\loci{2}$};
        \node[state, node distance=3.0cm, right=of p2] (p3) {$\loci{3}$};
        \node[state, below=of p2, final] (final) {$\locfinal$};

        \path [NFApath]
            (init) edge node {$\tuple{\styleact{\startend}, \var[1]}$} (p1)
            (p1) edge node[pos=0.3] {$\tuple{\styleact{\startend}, \var[2]}$} (p2)
            (p2) edge[bend left=10] node[] {$\tuple{\styleact{a}, \var[1]}$} (p3)
            (p3) edge[bend left=10] node[] {$\tuple{\styleact{b}, \var[2]}$} (p2)
            (p2) edge node[pos=0.7] {$\tuple{\styleact{\startend}, \var[1]}$} (final)
        ;
        \end{tikzpicture}}
        \caption{Counting.}%
        \label{figure:example}
    \end{subfigure}
	\hfill{}
	\begin{subfigure}[b]{0.7\textwidth}
    	\scalebox{0.725}{%
        \begin{tikzpicture}[NFA]
        \node[state, initial] (init) at (0, 0) {$\loc_0$};
        \node[state] (init_a_0) at (1*2.5, 1.5*0.9) {$\loc_1$};
        \node[state] (init_a_1) at (1*2.5, 0.5*0.9) {$\loc_2$};
        \node[state] (init_b_0) at (1*2.5, -0.5*0.9) {$\loc_3$};
        \node[state] (init_b_1) at (1*2.5, -1.5*0.9) {$\loc_4$};
        \node[state] (consistent) at (2*2.5, 0*0.9) {$\loc_5$};
        \node[state] (consistent_a_0) at (2.5*2.5, 2.25*0.9) {$\loc_6$};
        \node[state] (consistent_a_1) at (3.5*2.5, 1.25*0.9) {$\loc_7$};
        \node[state] (consistent_b_0) at (3.5*2.5, -1.25*0.9) {$\loc_8$};
        \node[state] (consistent_b_1) at (2.5*2.5, -2.25*0.9) {$\loc_9$};
        \node[state, final] (inconsistent) at (4.3*2.5, 0*0.9) {$\loc_{10}$};

        \path [NFApath]
            (init) edge[bend left] node[above,align=center, above left, pos=0.95] {$(\tuple{\styleact{a}, \exNIzero), \var[1]}$} (init_a_0)
            (init) edge node[above,align=center, above] {$(\tuple{\styleact{a}, \exNIone), \var[1]}$} (init_a_1)
            (init) edge node[above,align=center, below] {$(\tuple{\styleact{b}, \exNIzero), \var[1]}$} (init_b_0)
            (init) edge[bend right] node[above,align=center, below left, pos=0.95] {$(\tuple{\styleact{b}, \exNIone), \var[1]}$} (init_b_1)
            (init_a_0) edge[bend left] node[above,align=center, above right, pos=0.05] {$(\tuple{\styleact{a}, \exNIzero), \var[2]}$} (consistent)
            (init_a_1) edge node[above,align=center, above] {$(\tuple{\styleact{a}, \exNIone), \var[2]}$} (consistent)
            (init_b_0) edge node[above,align=center, below] {$(\tuple{\styleact{b}, \exNIzero), \var[2]}$} (consistent)
            (init_b_1) edge[bend right] node[above,align=center, below right, pos=0.05] {$(\tuple{\styleact{b}, \exNIone), \var[2]}$} (consistent)
            (consistent) edge[bend left=3] node[above,sloped,align=center] {$(\tuple{\styleact{a}, \exNIzero), \var[1]}$} (consistent_a_0)
            (consistent) edge[bend left=5] node[above,sloped,align=center,pos=0.7] {$(\tuple{\styleact{a}, \exNIone), \var[1]}$} (consistent_a_1)
            (consistent) edge[bend left=5] node[above,sloped,pos=0.7] {$(\tuple{\styleact{b}, \exNIzero), \var[1]}$} (consistent_b_0)
            (consistent) edge[bend left=3] node [above, sloped,,pos=0.7] {$(\tuple{\styleact{b}, \exNIone), \var[1]}$} (consistent_b_1)
            (consistent_a_0) edge[bend left=3] node[below,sloped,align=center,pos=0.3] {$(\tuple{\styleact{a}, \exNIzero), \var[2]}$} (consistent)
            (consistent_a_1) edge[bend left=5] node[below,sloped,align=center,pos=0.3] {$(\tuple{\styleact{a}, \exNIone), \var[2]}$} (consistent)
            (consistent_b_0) edge[bend left=5] node[below,sloped,pos=0.3] {$(\tuple{\styleact{b}, \exNIzero), \var[2]}$} (consistent)
            (consistent_b_1) edge[bend left=3] node[below, sloped] {$(\tuple{\styleact{b}, \exNIone), \var[2]}$} (consistent)
            (consistent_a_0) edge[bend left=30] node[above,align=center, above right,pos=0.6] {$(\tuple{\styleact{a}, \exNIone), \var[2]}$} (inconsistent)
            (consistent_a_1) edge node[below,sloped,pos=0.3] {$(\tuple{\styleact{a}, \exNIzero), \var[2]}$} (inconsistent)
            (consistent_b_0) edge node[above,sloped,pos=0.3] {$(\tuple{\styleact{b}, \exNIone), \var[2]}$} (inconsistent)
            (consistent_b_1) edge[bend right=30] node[above,align=center, below right,pos=0.6] {$(\tuple{\styleact{b}, \exNIzero), \var[2]}$} (inconsistent)
        ;
        \end{tikzpicture}
        }
        \caption{Matching interference.}%
        \label{figure:example:interference_NFA}
	\end{subfigure}
	\caption{Examples of NAAs.}\label{figure:example-NAAs}
\end{figure}
\begin{example}[interference]\label{example:interference_NAA}
 \emph{Noninterference}~\cite{Smith07} is one of the most typical examples of hyperproperties.
 A program~$P$ satisfies noninterference if for any memory states $\mu, \nu \in \mathbb{M}$ that agree on public variables,
 the memory states after running~$P$ from $\mu$ and~$\nu$ with an input sequence $\word \in \Alphabet_{I}^*$ also agree on public variables.
 Hyper pattern matching can extract witnesses of violation of noninterference.
 For instance, the NAA %
given in \cref{figure:example:interference_NFA}
 accepts evidences of interference for input actions $\Alphabet_{I} = \{a, b\}$, and public memory states $\mathbb{M} = \{\exNIzero, \exNIone\}$.
 Technically, the NAA in \cref{figure:example:interference_NFA} looks for pairs of executions starting from the same value of the variable, following the same sequence of actions, but leading to different values of the variable; for example, the execution going through $\loci{0}$, $\loci{1}$, $\loci{5}$, $\loci{6}$, $\loci{10}$ detects a pair of executions starting with variable value~$\exNIzero$, reading two ``$a$''s, but the first execution (encoded by~$\var[1]$) ends with value~$\exNIzero$ while the second one ends with value~$\exNIone$---which violates noninterference.
\end{example}

We evaluate hyper pattern matching from both theoretical and empirical perspectives.
Theoretically, we prove that it is \NP{}-complete to decide the nonemptiness of the match set, which indicates the intractability of hyper pattern matching.
Specifically, the time complexity of our algorithm is
exponential with respect to the number of directions,
polynomial with respect to the maximum length of the monitored words, and
quadratic with respect to the size of the NAA.\@
Empirically, we implement a prototype tool \ourTool{} for hyper pattern matching in the context of monitoring, and evaluate its efficiency via experiments.
We propose a naive algorithm and two heuristics to improve its efficiency by
\begin{oneenumerate}%
	\item skipping unnecessary matching trials inspired by efficient string matching algorithms, and
	\item pruning matching candidates by first performing non-hyper pattern matching over automata projected over directions.
\end{oneenumerate}
\ourTool{} can handle words with thousands of letters within one minute for \LongVersion{several }benchmarks with two directions, including the NAA in \cref{figure:example:interference_NFA},
which suggests the usefulness of hyper pattern matching for analysing a reasonably sized set of logs.

\fakeParagraph{Contributions}
Our contributions are summarized as follows:
\begin{enumerate}
	\item we propose hyper pattern matching, a generalization of pattern matching across multiple words or multiple portions of the same word;
    \item we show that nonemptiness checking of the match set in hyper pattern matching is \NP{}-complete;
	\item we provide a naive algorithm for hyper pattern matching as well as two heuristics to enhance its efficiency; %
	\item we implement our algorithms into a tool \ourTool{}, and demonstrate its capabilities over several benchmarks.
\end{enumerate}

\fakeParagraph{Related work}

NAAs, which is also called multi-tape automata~\cite{RS59}, have been studied in various domains with some variations, \eg{}~\cite{RS59,Furia12,Worrell13}. We mostly follow the formulation and terminologies in~\cite{GMO21}. %
Although the membership problem (\ie{} determining if a tuple of words is accepted by an NAA) has been studied well,
the pattern matching problem we study (\ie{} from a set of words, returning a tuple of words with intervals such that the tuple of words projected to the intervals is accepted by an NAA) has not been studied, to the best of our knowledge.
\LongVersion{Pattern matching has been extended for two-dimensional settings.}
In~\cite{Bird77}, two-dimensional pattern\LongVersion{ (actually string)} matching is considered, \ie{} matching a two-dimensional array of symbols in a text itself represented as a two-dimensional array.
In~\cite{AF92}, two-dimensional pattern matching is extended for a set of patterns, called two-dimensional dictionary matching.
Although these problems have been extensively studied~\cite{ABF94,KPR00,ZM05,EGGS25},
to the best of our knowledge, all these works focus on two-dimensional patterns without branching or loops, unlike our hyper pattern matching supporting multi-dimensional ``regular'' patterns against word sets of an arbitrary size.
Various algorithms have been proposed for monitoring hyperproperties~\cite{DFR12,AB16,BSB17,FHST18,FHST19,Hahn19,FHST20,AAAF22,CH23,AAAFGW24,BFFM24,CHdC24}. Most of these algorithms\LongVersion{ (\eg{}~\cite{AB16,BSB17,FHST18,FHST19,Hahn19,FHST20})} use HyperLTL~\cite{CFKMRS14} to represent the monitored property.
Due to the synchronous nature of HyperLTL, these algorithms cannot handle asynchronous hyperproperties, such as stuttering robustness (see \cref{example:robustness}).
The same limitation also applies to the algorithms using other related logics, such as Hyper-$\mu$HML used in~\cite{AAAF22,AAAFGW24}.
Moreover, when a violation of the monitored property is detected, these algorithms only return a Boolean verdict, a (minimal) subset of complete traces, or relevant prefixes of traces.
In contrast, our algorithm identifies the tuples of subwords matching the given property, which are finer-grained witnesses.

Recently, several papers have proposed monitoring algorithms for asynchronous hyperproperties.
In~\cite{CH23}, an automata-based formalism\LongVersion{ (``multi-trace prefix transducers'')} is introduced for monitoring asynchronous hyperproperties.
\emph{Hypernode automata}~\cite{BHNC23} is another automata-based formalism for asynchronous hyperproperties.
Each state of a hypernode automaton is labeled with a relational constraint on words represented by \emph{hypernode logic}.
\emph{Extended hypernode logic}~\cite{CHdC24} is an extension of hypernode logic with regular expressions and the stutter-reduction operation to reason about
\begin{oneenumerate}%
 \item the structure of the words and
 \item the synchronous and asynchronous comparison between the words.
\end{oneenumerate}
In~\cite{BFFM24}, a monitoring algorithm for $\mathrm{Hyper}^2\mathrm{LTL}_f$, a temporal logic representing second-order hyperproperties, is proposed.
The witnesses provided by these algorithms are also less informative than ours, analogous to the monitoring algorithms for synchronous hyperproperties.
Nevertheless, an extension of our algorithm to a more general class of hyperproperties, such as second-order hyperproperties, is one of the future directions.

In addition to the formalisms above, various logics have been proposed for representing asynchronous hyperproperties, such as A-HyperLTL~\cite{BCBFS21}, $\mathrm{HyperLTL}_S$~\cite{BPS21}, $\mathrm{HyperLTL}_C$~\cite{BPS21}, $\mathrm{GHyperLTL}_{S + C}$~\cite{BBST24}, HyperMTL~\cite{BPS20}, and $H_\mu$~\cite{GMO21}.
In~\cite{GMO21}, a construction of alternating asynchronous parity automata from $H_\mu$ formulas is shown.
A similar construction of NAAs to support these logics in hyper pattern matching is a future work.
\section{Preliminaries}\label{section:preliminaries}

We write $\setN$ and $\setNplus$ for the naturals and positive naturals.
For a partial function $f\colon Y \partfun Z$, we denote its domain by~$\domain(f)$.
For a set~$Y$, we denote by~$\powerset{Y}$ the powerset of~$Y$.
For a set~$Y$, we denote its cardinality by~$\cardinality{Y}$.
An \emph{alphabet} is a nonempty finite set~$\Alphabet$ of letters.
A (finite) \emph{word} over~$\Alphabet$ is a finite sequence of letters from~$\Alphabet$.
The \emph{empty word} is denoted by~$\emptyword$, and the set of all finite words is denoted by~$\Alphabet^*$.
For a word $\word = \action_1 \action_2 \cdots \action_n$,
we use $\slice{\word}{i}{j}$ to denote the subword $\action_{i} \action_{i+1} \dots \action_{j}$.
We write the $i$-th letter of a word $\word \in \Alphabet^*$ as $\wordi{i}$.
A \emph{language} is a subset of~$\Alphabet^*$.

\begin{definition}[NFA]
	A \emph{nondeterministic finite-word automaton} (NFA) is a tuple $\NFA = \tuple{\Alphabet, \Loc, \InitLoc, \Edges, \FinalLoc}$,
		where~$\Alphabet$ is an alphabet,
		$\Loc$ is a \LongVersion{nonempty }finite set of states,
		$\InitLoc \subseteq \Loc$ is a set of initial states,
		$\FinalLoc \subseteq \Loc$ is a set of accepting states,
		and $\Edges \subseteq \Loc \times \Alphabet \times \Loc$ %
 is a transition relation.
\end{definition}

A \emph{deterministic finite-word automaton (DFA)} is an NFA such that $\InitLoc$ is a singleton and for any $\loc \in \Loc$ and $\sigma \in \Alphabet$, there is exactly one $\loc' \in \Loc$ satisfying $(\loc, \sigma, \loc') \in \Edges$.
For DFAs, we regard $\Edges$ as a transition \emph{function}.
Given a word $\word = \sigma_1 \sigma_2 \cdots \sigma_n$ over~$\Alphabet$, a \emph{run} of~$\NFA$ on~$\word$ is a sequence of states $(\loc_0, \loc_1, \cdots, \loc_n)$ such that $\loc_0 \in \InitLoc$ and, for every $0 < i \leq n$, it holds that $(\loc_{i-1}, \action_i, \loc_i) \in \Edges$.
The run is \emph{accepting} if $\loc_n \in \FinalLoc$.
We say that $\NFA$ \emph{accepts}~$\word$ if there exists an accepting run of~$\NFA$ on~$\word$.
The \emph{language} $\Lg(\NFA)$ of~$\NFA$ is the set of all words accepted by~$\NFA$.
We use \emph{nondeterministic asynchronous finite automata (NAAs)}~\cite{GMO21} to represent asynchronous hyperproperties.
Intuitively, an NAA is an NFA equipped with \emph{directions} to asynchronously read multiple words.

\begin{definition}[NAA]
 A \emph{nondeterministic asynchronous finite automaton} (NAA) is a tuple
 $\NAA = \tuple{\Alphabet, \Vars, \Loc, \InitLoc, \Edges, \FinalLoc}$,
 where~$\Alphabet$, $\Loc$, $\InitLoc$, and $\FinalLoc$ are the same as in an NFA,
 and $\Vars = \{\var[1], \var[2], \dots, \var[k]\}$ is a set of \emph{directions}, and
 $\Edges \subseteq \Loc \times \Alphabet \times \Vars \times \Loc$ is a transition relation.
\end{definition}

For an NAA $\NAA = \tuple{\Alphabet, \Vars, \Loc, \InitLoc, \Edges, \FinalLoc}$,
we let the underlying NFA as $\asNFA{\NAA} = \tuple{\Alphabet \times \Vars, \Loc, \InitLoc, \Edges, \FinalLoc}$, \ie{}
we use $\Alphabet \times \Vars$ as the alphabet and deem $\Edges$ as a transition relation of an NFA.\@
For a word~$\extendedword$ over $\Alphabet \times \Vars$ and $\var \in \Vars$,
we let $\project{\extendedword}{\var} \in \Alphabet^*$ be the word constructed by
\begin{ienumerate}%
 \item removing the letters $\tuple{a, {\var}'}$ with ${\var}' \neq \var$ and
 \item projecting to the first element of each letter.%
\end{ienumerate}
We naturally extend $\projectsymbol$ to languages, \ie{} for $\Lg \subseteq (\Alphabet \times \Vars)^*$, $\project{\Lg}{\var} = \{\project{\extendedword}{\var} \mid \extendedword \in \Lg\}$.
An NAA $\NAA$ accepts $k$-tuple $\tuple{\word[1], \word[2], \dots, \word[k]}$ of words if
there is $\extendedword \in \Lg(\asNFA{\NAA})$ satisfying $\word[\var] = \project{\extendedword}{\var}$ for each $\var \in \Vars$.
We let $\hyperLg(\NAA)$ be the set of $k$-tuples of words accepted by $\NAA$.

\begin{example}\label{example:NAA}
    Let us revisit \cref{example:intro:counting} in a more formal manner.
	Consider the NAA $\NAA = \tuple{\Alphabet, \Vars, \Loc, \InitLoc, \Edges, \FinalLoc}$,
	with directions $\Vars = \{\var[1], \var[2]\}$.
    \cref{figure:example} illustrates~$\NAA$.
    Since we have $\Lg(\NFA) = \tuple{\startend, 1} \tuple{\startend, 1} (\tuple{a, 1} \tuple{b, 2})^* \tuple{\startend, 1}$,
    a $2$-tuple $\tuple{\word[1], \word[2]}$ of words is accepted by~$\NAA$ if and only if we have 
    $\word[1] = \startend a^n \startend$ and $\word[2] = \startend b^n $ for some $n \in \setN$,
    \ie{} $\NAA$ accepts a pair of words with the same number of ``$a$''s (preceded and followed by a~``$\startend$'') and of~``$b$''s (preceded by a~``$\startend$'').
\end{example}
 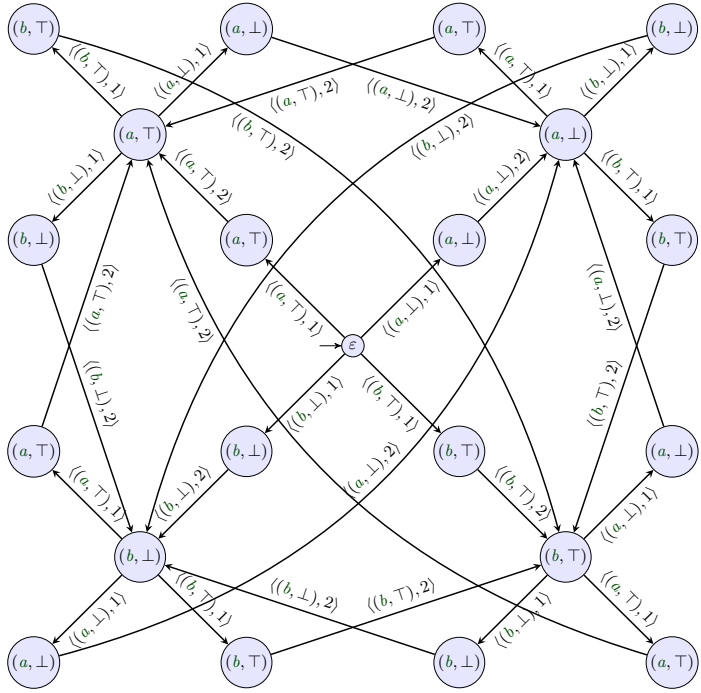
\begin{figure}[t]
 \centering
 \scalebox{0.70}{\begin{tikzpicture}[NFA]
  \node[state,initial] (init) at (0, 0) {$\emptyword$};
  \node[state] (a_top) at (-2.0, 2.0) {$(\styleact{a}, \top)$};
  \node[state] (a_bot) at (2.0, 2.0) {$(\styleact{a}, \bot)$};
  \node[state] (b_top) at (2.0, -2.0) {$(\styleact{b}, \top)$};
  \node[state] (b_bot) at (-2.0, -2.0) {$(\styleact{b}, \bot)$};
  \node[state] (a_top_2) at (-4.0, 4.0) {$(\styleact{a}, \top)$};
  \node[state] (a_bot_2) at (4.0, 4.0) {$(\styleact{a}, \bot)$};
  \node[state] (b_top_2) at (4.0, -4.0) {$(\styleact{b}, \top)$};
  \node[state] (b_bot_2) at (-4.0, -4.0) {$(\styleact{b}, \bot)$};
  \node[state] (a_bot_a_top) at (-2.0, 6.0) {$(\styleact{a}, \bot)$};
  \node[state] (b_top_a_top) at (-6.0, 6.0) {$(\styleact{b}, \top)$};
  \node[state] (b_bot_a_top) at (-6.0, 2.0) {$(\styleact{b}, \bot)$};
  \node[state] (a_top_a_bot) at (2.0, 6.0) {$(\styleact{a}, \top)$};
  \node[state] (b_top_a_bot) at (6.0, 2.0) {$(\styleact{b}, \top)$};
  \node[state] (b_bot_a_bot) at (6.0, 6.0) {$(\styleact{b}, \bot)$};
  \node[state] (a_top_b_top) at (6.0, -6.0) {$(\styleact{a}, \top)$};
  \node[state] (a_bot_b_top) at (6.0, -2.0) {$(\styleact{a}, \bot)$};
  \node[state] (b_bot_b_top) at (2.0, -6.0) {$(\styleact{b}, \bot)$};
  \node[state] (a_top_b_bot) at (-6.0, -2.0) {$(\styleact{a}, \top)$};
  \node[state] (a_bot_b_bot) at (-6.0, -6.0) {$(\styleact{a}, \bot)$};
  \node[state] (b_top_b_bot) at (-2.0, -6.0) {$(\styleact{b}, \top)$};

  \path [NFApath]
  (init) edge[bend left=0] node[below,sloped] {$\tuple{(\styleact{a}, \top), \var[1]}$} (a_top)
  (init) edge[bend left=0] node[below,sloped] {$\tuple{(\styleact{a}, \bot), \var[1]}$} (a_bot)
  (init) edge[bend left=0] node[below,sloped] {$\tuple{(\styleact{b}, \top), \var[1]}$} (b_top)
  (init) edge[bend left=0] node[below,sloped] {$\tuple{(\styleact{b}, \bot), \var[1]}$} (b_bot)
  (a_top) edge[bend left=0] node[above,sloped] {$\tuple{(\styleact{a}, \top), \var[2]}$} (a_top_2)
  (a_bot) edge[bend left=0] node[above,sloped] {$\tuple{(\styleact{a}, \bot), \var[2]}$} (a_bot_2)
  (b_top) edge[bend left=0] node[above,sloped] {$\tuple{(\styleact{b}, \top), \var[2]}$} (b_top_2)
  (b_bot) edge[bend left=0] node[above,sloped] {$\tuple{(\styleact{b}, \bot), \var[2]}$} (b_bot_2)

  (a_top_2) edge[bend left=0] node[above,sloped] {$\tuple{(\styleact{a}, \bot), \var[1]}$} (a_bot_a_top)
  (a_top_2) edge[bend left=0] node[above,sloped] {$\tuple{(\styleact{b}, \top), \var[1]}$} (b_top_a_top)
  (a_top_2) edge[bend left=0] node[above,sloped] {$\tuple{(\styleact{b}, \bot), \var[1]}$} (b_bot_a_top)
  (a_bot_2) edge[bend left=0] node[above,sloped] {$\tuple{(\styleact{a}, \top), \var[1]}$} (a_top_a_bot)
  (a_bot_2) edge[bend left=0] node[above,sloped] {$\tuple{(\styleact{b}, \top), \var[1]}$} (b_top_a_bot)
  (a_bot_2) edge[bend left=0] node[above,sloped] {$\tuple{(\styleact{b}, \bot), \var[1]}$} (b_bot_a_bot)
  (b_top_2) edge[bend left=0] node[above,sloped] {$\tuple{(\styleact{a}, \top), \var[1]}$} (a_top_b_top)
  (b_top_2) edge[bend left=0] node[below,sloped] {$\tuple{(\styleact{a}, \bot), \var[1]}$} (a_bot_b_top)
  (b_top_2) edge[bend left=0] node[below,sloped] {$\tuple{(\styleact{b}, \bot), \var[1]}$} (b_bot_b_top)
  (b_bot_2) edge[bend left=0] node[above,sloped] {$\tuple{(\styleact{a}, \top), \var[1]}$} (a_top_b_bot)
  (b_bot_2) edge[bend left=0] node[below,sloped] {$\tuple{(\styleact{a}, \bot), \var[1]}$} (a_bot_b_bot)
  (b_bot_2) edge[bend left=0] node[above,sloped] {$\tuple{(\styleact{b}, \top), \var[1]}$} (b_top_b_bot)

  (a_bot_a_top) edge[bend left=0] node[below,sloped] {$\tuple{(\styleact{a}, \bot), \var[2]}$} (a_bot_2)
  (b_top_a_top) edge[bend left=30] node[below,sloped,pos=0.3] {$\tuple{(\styleact{b}, \top), \var[2]}$} (b_top_2)
  (b_bot_a_top) edge[bend right=0] node[above,sloped] {$\tuple{(\styleact{b}, \bot), \var[2]}$} (b_bot_2)
  (a_top_a_bot) edge[bend right=0] node[below,sloped] {$\tuple{(\styleact{a}, \top), \var[2]}$} (a_top_2)
  (b_top_a_bot) edge[bend left=0] node[above,sloped] {$\tuple{(\styleact{b}, \top), \var[2]}$} (b_top_2)
  (b_bot_a_bot) edge[bend right=30] node[below,sloped,pos=0.3] {$\tuple{(\styleact{b}, \bot), \var[2]}$} (b_bot_2)
  (a_top_b_top) edge[bend left=28] node[below,sloped, pos=0.8] {$\tuple{(\styleact{a}, \top), \var[2]}$} (a_top_2)
  (a_bot_b_top) edge[bend right=0] node[below,sloped] {$\tuple{(\styleact{a}, \bot), \var[2]}$} (a_bot_2)
  (b_bot_b_top) edge[bend left=0] node[above,sloped] {$\tuple{(\styleact{b}, \bot), \var[2]}$} (b_bot_2)
  (a_top_b_bot) edge[bend left=0] node[below,sloped] {$\tuple{(\styleact{a}, \top), \var[2]}$} (a_top_2)
  (a_bot_b_bot) edge[bend right=30] node[above,sloped] {$\tuple{(\styleact{a}, \bot), \var[2]}$} (a_bot_2)
  (b_top_b_bot) edge[bend left=0] node[above,sloped] {$\tuple{(\styleact{b}, \top), \var[2]}$} (b_top_2)
 ;
 \end{tikzpicture}}
 \caption{Example: matching evidences of violations of stuttering robustness, with $\Alphabet = \Alphabet_{I} \times \Alphabet_{O}$, $\Alphabet_{I} = \{\styleact{a}, \styleact{b}\}$, and $\Alphabet_{O} = \{\top, \bot\}$. The self-loops, the accepting state, and the transitions to the accepting state are omitted: for each state labeled with $x \in \Alphabet$, we have a self-loop labeled with $(x, \var)$, where $\var \in \Vars$ is the direction of the incoming transition; for each state labeled with $(\sigma, \gamma)$, with an outgoing transition labeled with $\tuple{(\sigma, \gamma), 2}$ and with an incoming transition from a state labeled with $\emptyword$ or $(\sigma', \gamma)$, where $\sigma \neq \sigma'$, we have a transition to the accepting state labeled with $\tuple{(\sigma, \neg \gamma), 2}$, where $\neg \top = \bot$ and $\neg \bot = \top$.}%
  \label{figure:example:robustness_NAA}
 \end{figure}
\begin{example}[stuttering robustness]%
 \label{example:robustness}%
 \emph{Robustness} is another common hyperproperty.
 Robustness requires that for two similar inputs, the system's behavior must be the same (or similar).
 One of its instances is robustness with respect to stuttering, \ie{}
 if two sequences $\word, {\word}' \in \Alphabet_{I}^*$ of inputs are identical by reducing stuttering,
 these sequences accompanied by their corresponding outputs, (\ie{} $\tilde{\word}, \tilde{\word}' \in (\Alphabet_{I} \times \Alphabet_{O})^*$ whose projections to $\Alphabet_{I}^*$ are $\word$ and ${\word}'$, respectively) must be also the same after removing the stuttering.
 The NAA $\NAA = \tuple{\Alphabet, \Vars, \Loc, \InitLoc, \Edges, \FinalLoc}$,
 with directions $\Vars = \{\var[1], \var[2]\}$
 in \cref{figure:example:robustness_NAA}, %
 accepts evidences of non-robust execution.
 Intuitively, after reading each letter from direction~1, it asserts that the next letter from direction~2 has the same output if its input is the same as direction~1, where the stuttering is reduced by the self-loops, which are omitted in \cref{figure:example:robustness_NAA}.
\end{example}

Recall that we also showed in \cref{example:interference_NAA} that \mbox{(non-)}interference can be encoded using NAAs.
In addition, we give in \cref{example:network} in \cref{appendix:example:network} an\LongVersion{ additional} example of monitoring packets of similar size over a network (typically using a UDP-based protocol, \eg{} RTP~\cite{rfc1889}).

\section{Hyper Pattern Matching Problem}\label{section:problem}

\LongVersion{%
The formal definition of hyper pattern matching is as follows.
}

\defProblem{Hyper pattern matching}{
A finite set $\wordSet \in \powerset{\Alphabet^*}$ of words and
an NAA $\NAA = \tuple{\Alphabet, \Vars, \Loc, \InitLoc, \Edges, \FinalLoc}$ with $\Vars = \{\var[1], \dots, \var[k]\}$}{
The match set $\Match(\NAA, \wordSet) = \big\{ \tuple{(\word[1], i^{1}, j^1), \dots, (\word[k], i^k, j^k)} \in {(\wordSet \times \setN \times \setN)}^k \mid \tuple{\slice{\word[1]}{i^1}{j^1}, \slice{\word[2]}{i^2}{j^2}, \dots, \slice{\word[k]}{i^k}{j^k}} \in \hyperLg(\NAA) \big\}$
}
\begin{example}\label{example:matchset}
	Consider again the NAA~$\NAA$ in \cref{figure:example}. %
	Let $\wordSet = \{ \word \}$ be a singleton word set, with $\word = d\startend{}aa\startend{}bbb\startend{}aaa\startend{}\startend{}e$.
	For instance, $\tuple{(\word, 2, 5), (\word, 5, 7)} \in \Match(\NAA, \wordSet)$ holds because $\extendedword = \tuple{\startend{}, 1} \tuple{\startend{}, 2} \tuple{a, 1} \tuple{b, 2} \tuple{a, 1} \tuple{b, 2} \tuple{\startend{}, 1}$ satisfies $\extendedword \in \Lg(\asNFA{\NAA})$, $\project{\extendedword}{1} = \slice{\word}{2}{5}$, and $\project{\extendedword}{2} = \slice{\word}{5}{7}$.
	The match set $\Match(\NAA, \wordSet)$ is
	\[\Big\{ \tuple{ (\word, 2, 5 ) , (\word, 5, 7 )} , \tuple{ (\word, 9, 13 ) , (\word, 5, 8 )} \Big\} \cup \Big\{ \tuple{ (\word, 13, 14 ) , (\word, i, i)} \mid i \in \{ 2,5,9,13,14 \} \Big\} \text{.}\]
\end{example}

Deciding the nonemptiness of the match set is \NP{}-complete. This complexity suggests that the exponential blowup in the worst case of the hyper pattern matching algorithms we propose later in \cref{section:algorithm,section:heuristics} is inevitable.
(The proofs of this result and subsequent results are in the appendix.)

\begin{restatable}{theorem}{theoremMatchSetNPcomplete}
 \label{theorem:emptiness_matching_np_complete}
 The nonemptiness decision problem for the match set $\Match(\NAA, \wordSet)$ for an NAA $\NAA$ and a finite set $\wordSet$ of words is \NP{}-complete.
\end{restatable}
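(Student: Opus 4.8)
The plan is to prove the two directions of the \NP{}-completeness separately, with the upper bound being routine and the lower bound carrying the real content.

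\textbf{Membership in \NP{}.}
I would exhibit a short certificate for nonemptiness of $\Match(\NAA, \wordSet)$. Such a certificate consists of, for each direction $\var[i] \in \Vars$, a choice of word $\word[i] \in \wordSet$ together with indices $a_i \le b_i$, plus a single word $\extendedword$ over $\Alphabet \times \Vars$ and a run of the underlying NFA $\asNFA{\NAA}$ on $\extendedword$. The verifier accepts iff this run is accepting in $\asNFA{\NAA}$ and $\project{\extendedword}{\var[i]} = \slice{\word[i]}{a_i}{b_i}$ for every $i$. The observation that makes this certificate polynomial is that every edge of $\asNFA{\NAA}$ consumes exactly one letter of $\Alphabet \times \Vars$, and that letter projects to exactly one symbol of a single direction; hence $\cardinality{\extendedword} = \sum_{i} \cardinality{\project{\extendedword}{\var[i]}} = \sum_i (b_i - a_i + 1) \le k \cdot \max_{\word \in \wordSet} \cardinality{\word}$. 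Thus both $\extendedword$ and the run have length polynomial in the input, and all checks (each transition lies in $\Edges$, correct initial/final states, and the projection equalities) run in polynomial time. This gives membership in \NP{}.

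\textbf{\NP{}-hardness.}
I would reduce from SAT. Given a CNF formula $\fml$ over variables $x_1, \dots, x_n$ and clauses $C_1, \dots, C_m$, I take one direction per variable, $\Vars = \{\var[1], \dots, \var[n]\}$, alphabet $\Alphabet = \{0,1\}$, and the two-element word set $\wordSet = \{0^{m+1}, 1^{m+1}\}$. The crux of the encoding is that, because each word of $\wordSet$ is \emph{homogeneous}, any chosen subword $\slice{\word[i]}{a_i}{b_i}$ is a block of identical letters $v_i \in \{0,1\}$; consequently the string read along direction $\var[i]$ is forced to be constant, which is exactly what encodes a single consistent truth value for $x_i$. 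I then build $\asNFA{\NAA}$ as a chain $q_0 \to q_1 \to \cdots \to q_m$ with $q_0$ initial and $q_m$ accepting, where the edges from $q_{j-1}$ to $q_j$ encode clause $C_j$: for each literal of $C_j$ I add one edge reading $\tuple{1, \var[i]}$ if the literal is $x_i$, and $\tuple{0, \var[i]}$ if it is $\lnot x_i$. Prepending a small gadget that reads exactly one (arbitrary) letter of each direction before entering $q_0$ guarantees every chosen subword is non-empty, avoiding any empty-interval convention. An accepting run must then select, for each clause, a satisfying literal and read the corresponding value from that variable's direction; homogeneity forbids reading both $0$ and $1$ from the same direction, so the selected literals induce one consistent assignment satisfying every clause, and conversely any satisfying assignment yields a match. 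The construction is polynomial in $\cardinality{\fml}$ ($n$ directions, $O(n+m)$ states and edges, two words of length $m+1$), so $\Match(\NAA, \wordSet) \neq \emptyset$ iff $\fml$ is satisfiable.

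\textbf{Where the difficulty lies.}
Membership is the routine part. The delicate point in the hardness proof is enforcing \emph{global consistency} of the guessed assignment: a finite automaton that reads each variable's value possibly several times, in an order it does not control, cannot by itself remember an $n$-bit assignment. The resolution is to offload consistency onto the \emph{matching} side of the problem through homogeneous words, so that the projection constraint $\project{\extendedword}{\var[i]} = \slice{\word[i]}{a_i}{b_i}$ performs the bookkeeping for free. The step that requires the most care is verifying both implications, and in particular checking that the order in which $\extendedword$ interleaves its reads is immaterial, since each direction's projection is a constant block and therefore matches the chosen subword irrespective of the interleaving.
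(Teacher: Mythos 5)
Your proof is correct and takes essentially the same route as the paper: membership in \NP{} by guessing the words, intervals, and an interleaved witness $\extendedword$ of polynomial length and checking acceptance, and \NP{}-hardness by reducing CNF-SAT to a chain of clause gadgets over one direction per variable, with \emph{homogeneous} words forcing each direction to carry a single consistent truth value. The only differences are cosmetic: the paper's word set is all homogeneous words of length at most the number of clauses rather than your two words $0^{m+1}, 1^{m+1}$, and it omits your explicit nonemptiness gadget.
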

\section{A naive algorithm for hyper pattern matching}\label{section:algorithm}

Before presenting a naive algorithm for hyper pattern matching, we define an auxiliary notation.
For an NFA $\NFA = \tuple{\Alphabet \times \Vars, \Loc, \InitLoc, \Edges, \FinalLoc}$ with $\Vars = \{\var[1], \var[2], \dots, \var[k]\}$,
we define a relation ${\to} \subseteq \big({(\Alphabet^*)}^k \times \Loc \big) \times \big({(\Alphabet^*)}^k \times \Loc\big)$ such that
$\tuple{\worda[1], \worda[2], \dots, \worda[k], \loc} \to \tuple{\wordb[1], \wordb[2], \dots, \wordb[k], \loc'}$
if and only if
there is $l \in \{1, 2, \dots, k\}$ and $(\loc, (\action, \var[l]), \loc') \in \Edges$ %
satisfying $\worda[l] = \action \cdot \wordb[l]$, and for any $m \neq l$, $\worda[m] = \wordb[m]$ holds.
\begin{algorithm}[tb]
    \caption{A naive algorithm $\algoHPM$ for hyper pattern matching.}\label{algorithm:naive}

    \footnotesize

    \KwIn{A finite set $\wordSet \subseteq \Alphabet^*$ of words and an NAA $\NAA = \tuple{\Alphabet, \Vars, \Loc, \InitLoc, \Edges, \FinalLoc}$ with $\Vars = \{\var[1], \var[2], \dots, \var[k]\}$
 }
    \KwOut{The match set $\Match(\NAA, \wordSet) = \big\{\tuple{(\word[1], i^1, j^1), \dots, (\word[k], i^k, j^k)} \mid \tuple{\slice{\word[1]}{i^1}{j^1}, \slice{\word[2]}{i^2}{j^2}, \dots, \slice{\word[k]}{i^k}{j^k}} \in \hyperLg(\NAA) \big\}$}

    $\Match \gets \emptyset$\;

    \tcp{Priority queue of the beginning of the matching trials}
    $\waitingQueue \gets \big\{\tuple{i_1, \dots,i_k, \word[1], \dots, \word[k]} \mid \forall m \in \{1,\dots,k\}.\, \word[m] \in \wordSet, 1 \leq i_m \leq \cardinality{\word[m]} \big\}$\nllabel{algorithm:naive:initialize_waiting_queue}\;

    \While{$\cardinality{\waitingQueue} > 0$} {
        \tcp{Pop the ``smallest'' element from the priority queue}
        \KwPop{} $\tuple{i_1, \dots,i_k, \word[1], \dots, \word[k]}$ \KwFrom{} $\waitingQueue$\nllabel{algorithm:naive:pop} \;

        $p_1, p_2, \dots, p_k  \gets i_1, i_2, \dots, i_k$\nllabel{algorithm:naive:setp}

        $\currentConfigurations \gets \big\{\tuple{\wordi[1]{i_1}, \wordi[2]{i_2}, \dots, \wordi[k]{i_k}, \initLoc} \mid \initLoc \in \InitLoc \big\}$\nllabel{algorithm:naive:setCurrentC}
		\tcp*[f]{Start new matching trials}

        \While{$\currentConfigurations \neq \emptyset \land \exists m \in \{1, 2, \dots, k\}.\, p_m < |\word[m]|$ \nllabel{algorithm:naive:while:condition}} {
            \For{$m \in \{1, 2, \dots, k\}$ satisfying $p_m < |\word[m]|$ \nllabel{algorithm:naive:for}} {
                $p_m \gets p_m + 1$ \nllabel{algorithm:naive:setpm+1}
                \tcp{Read the $(p_m + 1)$-th letter ${\wordi[m]{p_m + 1}}$ of $\word[m]$}
                $\currentConfigurations \gets \big\{ \assign{\tuple{\worda[1], \dots, \worda[k], \loc}}{\worda[m]}{\worda[m] \cdot {\wordi[m]{p_m}}} \mid \tuple{\worda[1], \dots, \worda[k], \loc} \in \currentConfigurations \big\}$\nllabel{algorithm:naive:updateC}\;

                $\currentConfigurations' \gets \currentConfigurations$
                \tcp{Compute the next configuration}

                \While{$\currentConfigurations' \neq \emptyset$\nllabel{algorithm:naive:transitions:begin}} {
                    \KwPop{} $\tuple{\worda[1], \dots, \worda[k], \loc}$ \KwFrom{} $\currentConfigurations'$ \;
                    \For{$\tuple{\wordb[1], \dots, \wordb[k], \loc'} \not\in\currentConfigurations$ \st{} $\tuple{\worda[1], \dots, \worda[k], \loc} \to \tuple{\wordb[1], \dots, \wordb[k], \loc'}$} {
                        \tcp{Apply transitions}
                        $\currentConfigurations \gets \currentConfigurations \cup \{\tuple{\wordb[1], \dots, \wordb[k], \loc'}\}$;\,
                        $\currentConfigurations' \gets \currentConfigurations' \cup \{\tuple{\wordb[1], \dots, \wordb[k], \loc'}\}$\nllabel{algorithm:naive:transitions}
                    }
                }

                \For{$\tuple{\worda[1], \dots, \worda[k], \loc} \in \currentConfigurations$}{
                    \If(\tcp*[f]{Detect matching and update $\Match$}) {$\loc \in \FinalLoc$ \nllabel{algorithm:naive:detect}} {
                        \KwPush{} $\tuple{(\word[1], i_1, p_1 - |v_i|), \dots, (\word[k], i_k, p_k - |v_k|)}$ \KwTo{} $\Match$\nllabel{algorithm:naive:match}\;
                    }
                }

                \tcp{Remove the ``non-waiting'' configurations}
                $\currentConfigurations \gets \big\{ \tuple{\worda[1], \worda[2], \dots, \worda[k], \loc} \in \currentConfigurations \mid \exists m \in \{1,2,\dots,k\}.\, \worda[m] = \emptyword \big\}$\nllabel{algorithm:naive:remove}\label{algorithm:naive:transitions:end}
            }
         }
    }
    \KwRet{$\Match$}\;
\end{algorithm}
\cref{algorithm:naive} shows a naive algorithm for hyper pattern matching.
In \cref{algorithm:naive}, we use a priority queue~$\waitingQueue$ containing the information of the upcoming matching trials.
In~$\waitingQueue$, we use the lexicographic order, assuming that the set~$\wordSet$ of the examined words is totally ordered.
The exponential blowup with respect to~$k$ at \cref{algorithm:naive:initialize_waiting_queue} is most likely inevitable because the nonemptiness checking of $\Match(\NAA, \wordSet)$ is already \NP{}-hard (\cref{theorem:emptiness_matching_np_complete})\footnote{Here, we show an algorithm fully constructing $\waitingQueue$ at the beginning, to simplify the explanation of skipping in~\cref{ss:skipping}. In our tool \ourTool{}, we lazily construct $\waitingQueue$ and memorize the skipped indices separately to reduce the memory usage.}.
One can easily enforce additional constraints (\eg{} one word can be used in one matching only once) to the match set by modifying the definition of $\waitingQueue$.

For each $\tuple{i_1, \dots,i_k, \word[1], \dots, \word[k]} \in \waitingQueue$, we try to find matching trials starting from $\wordi[1]{i_1}, \dots \wordi[k]{i_k}$ (\crefrange{algorithm:naive:while:condition}{algorithm:naive:transitions:end}).
In matching trials, we maintain the set $\currentConfigurations$ of configurations.
Each configuration $\tuple{\worda[1], \dots, \worda[k], \loc}$ consists of a tuple $\tuple{\worda[1], \dots, \worda[k]}$ of words that are read by \cref{algorithm:naive} but not yet fed to the NAA $\NAA$ and the current state $\loc$.
We update $\currentConfigurations$ by appending a new letter (\cref{algorithm:naive:updateC}), applying transitions (\crefrange{algorithm:naive:transitions:begin}{algorithm:naive:transitions}), and removing ``non-waiting'' configurations, \ie{} the configurations $\tuple{\worda[1], \dots, \worda[k], \loc}$ with $\worda[l] \neq \emptyword$ for any $l \in \Vars$ (\cref{algorithm:naive:remove}).
These configurations can be removed because no additional transitions are enabled by appending letters.

\begin{example}\label{example:algorithm:naive}
	Consider again the NAA~$\NAA$ in \cref{example:NAA}.
	Let $\wordSet = \{ \word \}$ be a singleton word set, with $\word = \startend{}a\startend{}b$.
	The initial priority queue (with an exponential blowup) at \cref{algorithm:naive:initialize_waiting_queue} is
	\[ \begin{array}{l l}
	\waitingQueue = \big\{ &
	          \tuple{1, 1, \word, \word},
          \tuple{1, 2, \word, \word},
          \tuple{1, 3, \word, \word},
          \tuple{1, 4, \word, \word},
          \tuple{2, 1, \word, \word},
          \tuple{2, 2, \word, \word},
        \\
        &
          \tuple{2, 3, \word, \word},
          \tuple{2, 4, \word, \word},
          \tuple{3, 1, \word, \word},
          \tuple{3, 2, \word, \word},
          \tuple{3, 3, \word, \word},
          \tuple{3, 4, \word, \word},
          \\
          & \tuple{4, 1, \word, \word},
          \tuple{4, 2, \word, \word},
          \tuple{4, 3, \word, \word},
          \tuple{4, 4, \word, \word}
		\big\}.
	\end{array}\]
	Then, at \cref{algorithm:naive:pop}, we pop from~$\waitingQueue$ the smallest element, \ie{} $\tuple{1, 1, \word, \word}$.
	We let $p_1, p_2 \gets 1, 1$ (\cref{algorithm:naive:setp}).
	We let $\currentConfigurations \gets \{\tuple{\word_1, \word_1, \loci{0} } \} = \{\tuple{\startend{}, \startend{}, \loci{0} } \}$ (\cref{algorithm:naive:setCurrentC}).
	Because $\currentConfigurations \neq \emptyset$ and both $m = 1$ and $m = 2$ satisfy $p_m = 1 < |\word| = 4$ (\cref{algorithm:naive:while:condition}), we enter the \inlineAlgoStyleKW{while} loop.
	We iterate over both values for~$m$ (\cref{algorithm:naive:for}).
	Let us first consider $m = 1$.
	We set $p_1 \gets 2$ (\cref{algorithm:naive:setpm+1}).
	We update $\currentConfigurations \gets \{ \tuple{\startend{}a, \startend{}, \loci{0} } \}$ (\cref{algorithm:naive:updateC}).
    After applying transitions (\cref{algorithm:naive:transitions}) gives $\currentConfigurations \gets \{\tuple{\startend{}a, \startend{}, \loci{0} } , \tuple{a, \startend{}, \loci{1} }, \tuple{a, \emptyword, \loci{2} }, \tuple{\emptyword, \emptyword, \loci{3} } \}$,
	since $\tuple{\startend{}a, \startend{}, \loci{0} } \to \tuple{a, \startend{}, \loci{1} } \to \tuple{a, \emptyword, \loci{2} } \to \tuple{\emptyword, \emptyword, \loci{3} }$.
	No final configuration is reached (\cref{algorithm:naive:detect}), and therefore, no match is detected, and the non-waiting configurations (\cref{algorithm:naive:remove}) are removed, giving
	$\currentConfigurations \gets \{ \tuple{a, \emptyword, \loci{2} }, \tuple{\emptyword, \emptyword, \loci{3} } \}$.
	We then move to $m = 2$ and set $p_2 \gets 2$ (\cref{algorithm:naive:setpm+1}).
	We update $\currentConfigurations \gets \{ \tuple{a, a, \loci{2} }, \tuple{\emptyword, a, \loci{3} } \}$ (\cref{algorithm:naive:updateC}).
	No transition can be taken, and we exit the \inlineAlgoStyleKW{for} loop (\cref{algorithm:naive:for}) with $\currentConfigurations = \{\tuple{\emptyword, a, \loci{3} }\}$.
	In the second iteration of the \inlineAlgoStyleKW{while} loop, we set $p_1 \gets 3$ (\cref{algorithm:naive:setpm+1}), giving $\currentConfigurations \gets \{ \tuple{\startend{}, a, \loci{3} } \}$ (\cref{algorithm:naive:updateC}).
	No transition can be applied, and the non-waiting configurations are removed, yielding $\currentConfigurations \gets \emptyset$.
	This concludes the search for a match starting from $\tuple{1, 1, \word, \word}$ with a failure.

	Now, let us pop $\tuple{1, 3, \word, \word}$ from~$\waitingQueue$; we set $\currentConfigurations \gets \{\tuple{\startend{}, \startend{}, \loci{0} } \}$.
	We let $p_1, p_2 \gets 1, 3$ (\cref{algorithm:naive:setp}).
	In the first iteration of the \inlineAlgoStyleKW{while} loop (\cref{algorithm:naive:while:condition}), we first set $p_1 \gets 2$ (\cref{algorithm:naive:setpm+1}), and we set $\currentConfigurations \gets \{ \tuple{\startend{}a, \startend{}, \loci{0} } \}$.
	We then apply transitions. %
    After removing non-waiting configurations, we have $\currentConfigurations \gets \{ \tuple{\emptyword, \emptyword, \loci{3} } \}$.
	We then set $p_2 \gets 4$ (\cref{algorithm:naive:setpm+1}), and we set $\currentConfigurations \gets \{ \tuple{\emptyword, b, \loci{3} } \}$.
	This time, we can apply transitions, yielding $\currentConfigurations \gets \{ \tuple{\emptyword, b, \loci{3} } , \tuple{ \emptyword, \emptyword, \loci{2} } \}$.
	In the second iteration of the \inlineAlgoStyleKW{while} loop (\cref{algorithm:naive:while:condition}), we first set $p_1 \gets 3$ (\cref{algorithm:naive:setpm+1}), and we set $\currentConfigurations \gets \{ \tuple{\startend{}, b, \loci{3} } , \tuple{ \startend{}, \emptyword, \loci{2} } \}$.
	We then apply transitions, giving $\currentConfigurations \gets \{ \tuple{\startend{}, b, \loci{3} } , \tuple{ \startend{}, \emptyword, \loci{2} } , \tuple{\emptyword, \emptyword, \locfinal } \}$.
	We found an accepting state (\cref{algorithm:naive:detect}), and we update $\Match \gets \{ \tuple{ (\word, 1, 3) , (\word, 3, 4) } \}$ (\cref{algorithm:naive:match}).
	After removing non-waiting configurations, we have $\currentConfigurations \gets \{ \tuple{ \startend{}, \emptyword, \loci{2} } , \tuple{\emptyword, \emptyword, \locfinal } \}$.
	We then do not consider~$p_2$, as $p_2 < 4$ does not hold anymore (\cref{algorithm:naive:for}).
	In the third iteration of the \inlineAlgoStyleKW{while} loop (\cref{algorithm:naive:while:condition}), we set $p_1 \gets 4$ (\cref{algorithm:naive:setpm+1}), and we set $\currentConfigurations \gets \{ \tuple{ \startend{}b, \emptyword, \loci{2} } , \tuple{b, \emptyword, \locfinal } \}$.
	We then apply transitions, giving $\currentConfigurations \gets \{ \tuple{ \startend{}b, \emptyword, \loci{2} } %
 , \tuple{ b, \emptyword, \locfinal }\}$.
	We found another match $\{ \tuple{ (\word, 1, 4 - 1) , (\word, 3, 4) } \}$, which does not modify~$\Match$ as it was found before.
	Any other starting configuration will result in failure, and the final match set is---as expected---$\Match(\NAA, \wordSet) = \{ \tuple{ (\word, 1, 3) , (\word, 3, 4) } \}$.
\end{example}
\fakeParagraph{Complexity analysis}
The initial size of the priority queue $\waitingQueue$ at \cref{algorithm:naive:initialize_waiting_queue} is bounded by $\cardinality{\wordSet}^k \times (\max_{\word \in \wordSet}\cardinality{\word})^k$.
The number of iterations of the \inlineAlgoStyleKW{while} loop from \cref{algorithm:naive:while:condition} is bounded by $\max_{\word \in \wordSet}\cardinality{\word}$.
For each such iteration, the number of iterations of the \inlineAlgoStyleKW{while} loop from \cref{algorithm:naive:transitions:begin}, is bounded by $\cardinality{\Loc} \times \max_{\word \in \wordSet}^k$,
and for each iteration, at most $\cardinality{\Loc}$ configurations are added to $\currentConfigurations$ and $\currentConfigurations'$.
Overall, the time complexity of \cref{algorithm:naive} is bounded by $\complexityClass\big(\cardinality{\wordSet}^{k} \times \max_{\word \in \wordSet}\cardinality{\word}^{k+1} \times \cardinality{\Loc}^2 \big)$.

\section{Heuristics for hyper pattern matching}\label{section:heuristics}

Here, we present two heuristics to improve the efficiency of hyper pattern matching: FJS-style skipping (\cref{ss:skipping}) and projection-based pruning (\cref{ss:projection}).

\subsection{FJS-style skipping of matching trials}\label{ss:skipping}
The FJS algorithm~\cite{FJS07} is an efficient algorithm for the string matching problem: given a pattern word $\pattern$ and a target word $\target$, it finds all occurrences of $\pattern$ within $\target$.
The idea of the FJS algorithm has been used to improve the efficiency of automata-based pattern matching, \eg{}~\cite{WHS17,WAH23}.
We apply a similar idea to \cref{algorithm:naive} to improve its efficiency.

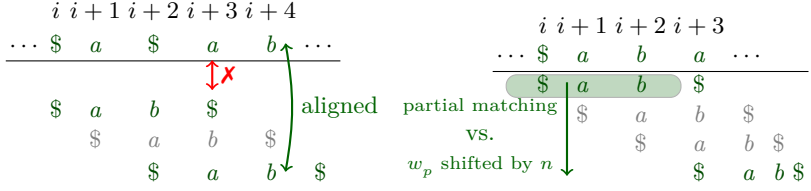
\begin{figure}[tbp]
 \begin{subfigure}[b]{0.47\textwidth}
  \centering
  \footnotesize
  \begin{tabular}{cccccccccc}
   & $i$ & $i + 1$ & $i + 2$ & $i + 3$ & $i + 4$ \\
   $\cdots$ & \styleact{\startend{}} & \styleact{a} & \styleact{\startend{}} & \tikz[remember picture,baseline=(target_qs_last.base)]{\node (target_qs_last) {\styleact{a}};} & \tikz[remember picture,baseline=(target_qs_next.base)]{\node (target_qs_next) {\styleact{b}};} & $\cdots$\\\hline
   \\
   & \styleact{\startend{}} & \styleact{a} & \styleact{b} & \tikz[remember picture,baseline=(pattern_qs_last.base)]{\node (pattern_qs_last) {\styleact{\startend{}}};} &&\\
   && \styleskippedact{\startend{}} & \styleskippedact{a} & \styleskippedact{b} & \styleskippedact{\startend{}} \\
   &&& \styleact{\startend{}} & \styleact{a} & \tikz[remember picture,baseline=(pattern_qs_next.base)]{\node (pattern_qs_next) {\styleact{b}};} & \styleact{\startend}  \\
  \end{tabular}
  \begin{tikzpicture}[remember picture, overlay]
   \path [<->,thick,red]
   (target_qs_last.south) edge[bend left=0] node[right] {\color{red}\xmark} (pattern_qs_last.north)
   ;
   \path [<->,thick,DarkGreen]
   (target_qs_next.east) edge[bend left=10] node[right] {\color{DarkGreen}aligned} (pattern_qs_next.east)
   ;
  \end{tikzpicture}
  \caption{QS-style skipping based on letter alignment.}%
 \label{figure:illustration_fjs:qs}
 \end{subfigure}
 \hfill
 \begin{subfigure}[b]{0.50\textwidth}
  \centering
  \footnotesize
  \begin{tabular}{cccccccccc}
   & $i$ & $i + 1$ & $i + 2$ & $i + 3$ \\
   $\cdots$ & \styleact{\startend{}} & \styleact{a} & \styleact{b} & \styleact{a} & $\cdots$ \\\hline
   & \styleact{\startend{}} & \tikz[remember picture,baseline=(pattern_kmp_partial_matching.base)] {\node (pattern_kmp_partial_matching){\styleact{a}}} & \styleact{b} & \styleact{\startend{}} \\
   && \styleskippedact{\startend{}} & \styleskippedact{a} & \styleskippedact{b} & \styleskippedact{\startend{}} \\
   &&& \styleskippedact{\startend{}} & \styleskippedact{a} & \styleskippedact{b} & \styleskippedact{\startend{}} \\
   &&&& \styleact{\startend{}} & \styleact{a} & \styleact{b} & \styleact{\startend{}} \\
  \end{tabular}
  \begin{tikzpicture}[remember picture, overlay]
   \draw[rounded corners,fill=DarkGreen,nearly transparent,draw opacity=0.3] ($(pattern_kmp_partial_matching.south) - (1.0,-0.05)$) rectangle ++(2.3,0.3);
   \path [->,thick,DarkGreen]
   ($(pattern_kmp_partial_matching.south) - (0.20,-0.25)$)
    edge[bend left=0] node[left,pos=0.6,align=center] {\scriptsize partial matching\\ vs.\\\scriptsize $\pattern$ shifted by $n$}
   ($(pattern_kmp_partial_matching.south) - (0.20,0.99)$)
   ;
  \end{tikzpicture}
  \caption{KMP-style skipping utilizing the latest successful partial matching.}%
 \label{figure:illustration_fjs:kmp}
 \end{subfigure}
 \caption{Illustration of skipping in the FJS algorithm for string matching, with the pattern word $\pattern = \styleact{\startend{}ab\startend{}}$. The skipped matching trials are shown in grey.}
 \label{figure:illustration_fjs}
\end{figure}
The central idea of FJS-style algorithms is to efficiently identify some matching trials as unnecessary and skip them.
FJS-style algorithms combine QS-style skipping and KMP-style skipping, which originate from the Quick Search (QS) algorithm~\cite{Sunday90} and the Knuth-Morris-Pratt (KMP) algorithm~\cite{KMP77}, respectively.

\cref{figure:illustration_fjs} illustrates the idea of skipping in the FJS algorithm for string matching.
The QS-style skipping moves the pattern word $\pattern$ so that the letter in a certain position of the target word $\target$ aligns with the same letter in $\pattern$.
We first compare the last letter of $\pattern$ (\styleact{\startend{}} in \cref{figure:illustration_fjs:qs}) with the corresponding letter in $\target$ (the $(i+3)$-th letter of $\target$, \ie{} \styleact{a}, in \cref{figure:illustration_fjs:qs}).
If they are different, we move $\pattern$ so that the letter in the target word immediately after the mismatched letter can have a matching.
In this example, we move $\pattern$ so that the $(i+4)$-th letter of $\target$, \ie{} \styleact{b}, aligns with the next occurrence of~\styleact{b} in $\pattern$.
One can efficiently perform such skipping by constructing $\SkipQS\colon \Alphabet \to \setN$ that maps the aligned letter (\styleact{b}~in \cref{figure:illustration_fjs:qs}) to the length of the skip (2~in \cref{figure:illustration_fjs:qs}) in advance.

The KMP-style skipping uses the information of the partial matching in the latest matching trial to identify unnecessary matching trials.
In the example in \cref{figure:illustration_fjs:kmp}, the latest matching trial was successful for three letters.
Based solely on this information, we know that the $i$-th to $(i+2)$-th letters of $\target$ are $\styleact{\startend{}ab}$.
Since the minimum $n \in \setNplus$ satisfying $\styleact{\startend{}ab} \cdot \Alphabet^* \cap \Alphabet^{n} \cdot \pattern \neq \emptyset$ is~3, we can skip the matching trials from the $(i + j)$-th letter of $\target$ with $j \in \{1, 2\}$.
One can efficiently perform such skipping by constructing $\SkipKMP\colon \{0,1,\dots,\cardinality{\pattern}\} \to \setN$ that maps the length of partial matching (3~in \cref{figure:illustration_fjs:kmp}) to the minimum $n$ above (3~in \cref{figure:illustration_fjs:kmp}) in advance.

In~\cite{WHS17}, an FJS-style algorithm for NFA pattern matching was proposed.
The main ideas of this extension are summarized as follows:
1)~The length $\ShortestMatching{}$ of the shortest matching is used instead of the length $\cardinality{\pattern}$ of the pattern word;
2)~The set $\LastQS{}$ of letters that can appear as the $\ShortestMatching{}$-th letter of a word accepted by the pattern NFA is constructed beforehand;
3)~A partial matching is characterized by a state of the pattern NFA instead of its length, \ie{} $\SkipKMP$ takes a state $\loc \in \Loc$ instead of $n \in  \{0,1,\dots,\cardinality{\pattern}\}$.
Here, we further generalize these ideas for hyper pattern matching by parametrising the above concepts with directions.
In this multi-directional extension, we reformulate the notion of ``skipping'' as ``invalidating some positions''.
Thanks to this formulation, we can skip matching trials focusing on \emph{each word} rather than \emph{each tuple of words}.

\begin{definition}
 [QS-style skip values]
 Let $\Vars = \{\var[1], \var[2], \dots, \var[k]\}$ and
 $\NFA$ be an NFA over $\Alphabet \times \Vars$.
 We let $\ShortestMatching{}$ be the length of the shortest word accepted by $\NFA$, \ie{} $\ShortestMatching{} = \min_{\word \in \Lg(\NFA)} \cardinality{\word}$.
 For $m \in \Vars$,
 we let $\ShortestMatching{m}$ be the minimum number of occurrences of $\var[m]$ in the first $\ShortestMatching{}$ letters of $\word \in \Lg(\NFA)$, \ie{} $\ShortestMatching{m} = \min_{\word \in \Lg(\NFA)} \cardinality{\project{\slice{\word}{1}{\ShortestMatching{}}}{\var[m]}}$.
 For $m \in \Vars$,
 we let $\LastQS{m} \subseteq \Alphabet$ be the set of $\ShortestMatching{m}$-th letters of $\project{\Lg(\NFA)}{\var[m]}$.
 For $m \in \Vars$ and $\action \in \Alphabet$,
 we let $\SkipQS^m(\action) \in \setN$ be $\SkipQS^m(\action) = \min\big\{\ShortestMatching{m}+1, \min \{ i \in \{1,2, \dots, \ShortestMatching{m}\}\mid \exists \word \in \Lg(\NFA).\, \text{the ($\ShortestMatching{m} + 1 - i$)-th letter of $\project{\word}{\var[m]}$ is~$\action$}\} \big\}$.
\end{definition}
\begin{restatable}[correctness of QS-style skipping]{theorem}{theoremCorrectnessQSskipping}
 \label{theorem:correctness_QS}
 Let $\wordSet$ be a finite set of words,
 $\Vars = \{\var[1], \var[2], \dots, \var[k]\}$, and
 $\NAA = \tuple{\Alphabet, \Vars, \Loc, \InitLoc, \Edges, \FinalLoc}$ be an NAA.\@
 For any $\word \in \wordSet$,
 $m \in \{1, 2, \dots, k\}$,
 $i \in \{1, 2, \dots, \cardinality{\word} - \ShortestMatching{m}\}$, and
 $\tuple{(\word[1], i^1, j^1), \dots, (\word[k], i^k, j^k)} \in \Match(\NAA, \wordSet)$,
 if $\ShortestMatching{m} > 0$, $\wordi{i + \ShortestMatching{m} - 1} \not\in \LastQS{m}$, and $\word[m] = \word$, we have
 $i^m < i$ or  $i^m \geq i + \SkipQS^m(\wordi{i + \ShortestMatching{m}})$.
\end{restatable}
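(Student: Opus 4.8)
The plan is to argue by contradiction. Assuming the conclusion fails, we have $i \le i^m < i + \SkipQS^m(\wordi{i + \ShortestMatching{m}})$, and I will derive a contradiction from the three standing hypotheses $\ShortestMatching{m} > 0$, $\wordi{i+\ShortestMatching{m}-1} \notin \LastQS{m}$, and $\word[m] = \word$. First I would unfold the membership $\tuple{(\word[1],i^1,j^1),\dots,(\word[k],i^k,j^k)} \in \Match(\NAA,\wordSet)$: by definition of $\Match$ and $\hyperLg$ there is a witness $\extendedword \in \Lg(\asNFA{\NAA})$ with $\slice{\word[l]}{i^l}{j^l} = \project{\extendedword}{\var[l]}$ for each direction~$l$; in particular $\project{\extendedword}{\var[m]}$ is the \emph{contiguous} subword $\slice{\word}{i^m}{j^m}$, so its $p$-th letter is exactly $\wordi{i^m + p - 1}$. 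Two length facts follow from the definitions: $\cardinality{\extendedword} \ge \ShortestMatching{}$ by minimality of $\ShortestMatching{}$, and since the first $\ShortestMatching{}$ letters of any accepted word carry at least $\ShortestMatching{m}$ occurrences of $\var[m]$ by minimality of $\ShortestMatching{m}$, the projection $\project{\extendedword}{\var[m]}$ has length at least $\ShortestMatching{m}$.

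Writing $d = i^m - i$, the failed conclusion reads $0 \le d \le \SkipQS^m(\wordi{i+\ShortestMatching{m}}) - 1$, and I would split on~$d$. If $d = 0$, the $\ShortestMatching{m}$-th letter of $\project{\extendedword}{\var[m]}$ is $\wordi{i^m + \ShortestMatching{m} - 1} = \wordi{i + \ShortestMatching{m} - 1}$; but by definition $\LastQS{m}$ collects exactly the possible $\ShortestMatching{m}$-th letters of $\var[m]$-projections of accepted words, so this letter lies in $\LastQS{m}$, contradicting $\wordi{i+\ShortestMatching{m}-1} \notin \LastQS{m}$. If $d \ge 1$, I would locate the fixed letter $\action := \wordi{i + \ShortestMatching{m}}$ inside the projection: a direct index computation gives $\wordi{i^m + (\ShortestMatching{m}+1-d) - 1} = \wordi{i + \ShortestMatching{m}} = \action$, so $\action$ is the $(\ShortestMatching{m}+1-d)$-th letter of $\project{\extendedword}{\var[m]}$. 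This exhibits $\extendedword \in \Lg(\asNFA{\NAA})$ as a witness that $\action$ occurs as the $(\ShortestMatching{m}+1-d)$-th letter of some $\var[m]$-projection, hence $d$ lies in the index set minimized in the definition of $\SkipQS^m(\action)$, forcing $\SkipQS^m(\action) \le d$ and contradicting $d \le \SkipQS^m(\action) - 1$.

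Before that last step is valid I must discharge the range conditions, which is where the bookkeeping concentrates. From $\SkipQS^m(\action) \le \ShortestMatching{m} + 1$ and $d \le \SkipQS^m(\action) - 1$ I get $1 \le d \le \ShortestMatching{m}$, so the target index $\ShortestMatching{m}+1-d$ lies in $\{1,\dots,\ShortestMatching{m}\}$ as required by the definition of $\SkipQS^m$; the length bound $\cardinality{\project{\extendedword}{\var[m]}} \ge \ShortestMatching{m} \ge \ShortestMatching{m}+1-d$ guarantees this position actually exists in the projection; and $i \le \cardinality{\word} - \ShortestMatching{m}$ ensures $\wordi{i+\ShortestMatching{m}}$ is well defined in $\word$. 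The main obstacle is thus not conceptual but this careful alignment of positions in $\word$ with positions in the projection: the combinatorial heart is simply that a contiguous match beginning $d \ge 1$ letters to the right of~$i$ pushes the fixed letter $\wordi{i+\ShortestMatching{m}}$ into projection slot $\ShortestMatching{m}+1-d$, exactly the situation $\SkipQS^m$ is constructed to exclude whenever $d < \SkipQS^m(\wordi{i+\ShortestMatching{m}})$.
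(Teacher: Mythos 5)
Your proof is correct and follows essentially the same route as the paper's: the $d=0$ case is exactly the paper's use of $\wordi{i+\ShortestMatching{m}-1}\not\in\LastQS{m}$ to exclude $i^m=i$, and the $d\ge 1$ case is the contrapositive of the paper's observation that, by minimality in the definition of $\SkipQS^m$, the letter $\wordi{i+\ShortestMatching{m}}$ cannot occupy projection position $\ShortestMatching{m}+1-d$ for $d<\SkipQS^m(\wordi{i+\ShortestMatching{m}})$. The only difference is presentational (proof by contradiction with an explicit case split on $d$, plus the range and length bookkeeping that the paper leaves implicit), so no gap to report.
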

\begin{definition}
 [KMP-style skip values]
 Let $\NFA = \tuple{\Alphabet \times \Vars, \Loc, \InitLoc, \Edges, \FinalLoc}$ be an NFA over $\Alphabet \times \Vars$ with $\Vars = \{\var[1], \var[2], \dots, \var[k]\}$.
 For any $\loc \in \Loc$, we let $\NFA_{\loc}$ be $\NFA$ with accepting states $\{\loc\}$, \ie{} $\NFA_{\loc} = \tuple{\Alphabet \times \Vars, \Loc, \InitLoc, \Edges, \{\loc\}}$.
 For any $m \in \Vars$ and $\loc \in \Loc$, we let $\SkipKMP^m(\loc) \in \setNplus$ be 
 $\SkipKMP^m(\loc) = \min \big\{ n \in \setNplus \mid (\project{\Lg(\NFA_{\loc})}{\var[m]} \cdot \Alphabet^*) \cap (\Alphabet^n \cdot \project{\Lg(\NFA)}{\var[m]} \cdot \Alphabet^*) \neq \emptyset \big\}$.
\end{definition}
For
 an NFA $\NFA = \tuple{\Alphabet \times \Vars, \Loc, \InitLoc, \Edges, \FinalLoc}$\LongVersion{ over $\Alphabet \times \Vars$} with $\Vars = \{\var[1], \var[2], \dots, \var[k]\}$,
 $\word \in \Alphabet^*$, and
 $m \in \Vars$,
we let
$\Loc^m_{\word} \subseteq \Loc$ be the set of states reachable by a word $\extendedword \in {(\Alphabet \times \Vars)}^*$ whose $\var[m]$-projection $\project{\extendedword}{\var[m]}$ is $\word$,
\ie{} $\Loc^m_{\word} = \big\{ \loc \in \Loc \mid \word \in \project{\Lg(\NFA_{\loc})}{\var[m]} \big\}$.

\begin{restatable}[correctness of the KMP-style skipping]{theorem}{theoremKMPskipping}
 \label{theorem:correctness_KMP}
 Let $\wordSet$ be a finite set of words,
 $\Vars = \{\var[1], \var[2], \dots, \var[k]\}$, and
 $\NAA = \tuple{\Alphabet, \Vars, \Loc, \InitLoc, \Edges, \FinalLoc}$ be an NAA.\@
 For any
 $m \in \{1, 2, \dots, k\}$,
 $\word \in \wordSet$,
 $i \in \{1,2,\dots, \cardinality{\word}\}$,
 $j \geq i$, and
 $\loc \in \Loc^m_{\word}$,
 there is no
 $\tuple{(\word[1], i^1, j^1), \dots, (\word[k], i^k, j^k)} \in \Match(\NAA, \wordSet)$,
 with $\word[m] = \word$ and $i^m \in \{i + 1, i + 2, \dots, i + \SkipKMP^m(\loc) - 1\}$.
\end{restatable}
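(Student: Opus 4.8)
The plan is to argue by contradiction, turning a forbidden match into a single concrete word that violates the minimality built into the definition of $\SkipKMP^m(\loc)$. Throughout I write $\NFA = \asNFA{\NAA}$ for the underlying NFA, and recall that by the acceptance definition $\project{\Lg(\NFA)}{\var[m]}$ is exactly the set of $\var[m]$-components of the tuples in $\hyperLg(\NAA)$. I also read the hypothesis $\loc \in \Loc^m_{\slice{\word}{i}{j}}$ as recording that, along direction~$m$, the subword $\slice{\word}{i}{j}$ read from position~$i$ drives $\NFA$ into state~$\loc$; this is the partial match produced by the current matching trial that we wish to exploit for skipping.

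So suppose, for contradiction, that some $\tuple{(\word[1], i^1, j^1), \dots, (\word[k], i^k, j^k)} \in \Match(\NAA, \wordSet)$ has $\word[m] = \word$ and $i^m = i + n$ with $n \in \{1, \dots, \SkipKMP^m(\loc) - 1\}$. First I would unfold the match-set membership: there is $\extendedword \in \Lg(\NFA)$ with $\project{\extendedword}{\var[l]} = \slice{\word[l]}{i^l}{j^l}$ for every $l$. Projecting on direction~$m$ gives $\slice{\word}{i+n}{j^m} = \project{\extendedword}{\var[m]} \in \project{\Lg(\NFA)}{\var[m]}$, and since the suffix $\slice{\word}{j^m+1}{\cardinality{\word}} \in \Alphabet^*$ may be appended freely, we obtain $\slice{\word}{i+n}{\cardinality{\word}} \in \project{\Lg(\NFA)}{\var[m]} \cdot \Alphabet^*$.

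The key step is then to exhibit one word lying in both languages occurring in the definition of $\SkipKMP^m(\loc)$. I would take $u = \slice{\word}{i}{\cardinality{\word}}$. On one side, $\loc \in \Loc^m_{\slice{\word}{i}{j}}$ gives $\slice{\word}{i}{j} \in \project{\Lg(\NFA_{\loc})}{\var[m]}$, hence $u \in \project{\Lg(\NFA_{\loc})}{\var[m]} \cdot \Alphabet^*$. On the other side, splitting $u = \slice{\word}{i}{i+n-1} \cdot \slice{\word}{i+n}{\cardinality{\word}}$ with first factor of length exactly~$n$, the previous paragraph yields $u \in \Alphabet^n \cdot \project{\Lg(\NFA)}{\var[m]} \cdot \Alphabet^*$ (this uses $i + n = i^m \leq \cardinality{\word}$, so $u$ is long enough to carry the length-$n$ prefix). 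Thus $u$ witnesses $(\project{\Lg(\NFA_{\loc})}{\var[m]} \cdot \Alphabet^*) \cap (\Alphabet^n \cdot \project{\Lg(\NFA)}{\var[m]} \cdot \Alphabet^*) \neq \emptyset$ with $n \in \setNplus$ and $n < \SkipKMP^m(\loc)$, contradicting the minimality in the definition of $\SkipKMP^m(\loc)$.

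I expect the main difficulty to be bookkeeping rather than any deep step: one must keep the index shift aligned with the structure $\Alphabet^n \cdot (\cdots)$ in the definition (the candidate start $i^m$ sits $n$ letters to the right of the position~$i$ where the partial match began, which is precisely why $u$ is anchored at $i$), and check the length condition guaranteeing the $\Alphabet^n$ prefix. A minor point worth flagging is the reading of $\loc \in \Loc^m_{\word}$ in the statement: the argument genuinely uses the partial match $\slice{\word}{i}{j}$ reaching $\loc$, so the hypothesis is to be understood as $\loc \in \Loc^m_{\slice{\word}{i}{j}}$, which also explains the role of the otherwise-unused bound $j \geq i$ among the hypotheses.
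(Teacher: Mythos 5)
Your proof is correct and follows essentially the same route as the paper's: the paper factors the core argument into a lemma showing $\slice{\word}{i'}{j'} \notin \project{\Lg(\NFA)}{\var[m]}$ for the skipped starting positions, but the underlying contradiction---exhibiting $\slice{\word}{i}{\cardinality{\word}}$ as a witness of $\big(\project{\Lg(\NFA_{\loc})}{\var[m]} \cdot \Alphabet^*\big) \cap \big(\Alphabet^n \cdot \project{\Lg(\NFA)}{\var[m]} \cdot \Alphabet^*\big) \neq \emptyset$ for some $n < \SkipKMP^m(\loc)$, contradicting minimality---is the same as yours. Your reading of the hypothesis as $\loc \in \Loc^m_{\slice{\word}{i}{j}}$ is also the intended one: the paper's auxiliary lemma is stated with exactly that hypothesis, which is what makes the otherwise-unused indices $i$ and $j$ relevant.
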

\begin{algorithm}[tb]
    \caption{An FJS-style algorithm $\algoHPMFJS$ for hyper pattern matching.}\label{algorithm:FJS}
    \footnotesize
    \KwIn{A finite set $\wordSet \subseteq \Alphabet^*$ of words and an NAA $\NAA = \tuple{\Alphabet, \Vars, \Loc, \InitLoc, \Edges, \FinalLoc}$ with $\Vars = \{\var[1], \var[2], \dots, \var[k]\}$%
 }
    \KwOut{The match set $\Match(\NAA, \wordSet) = \big\{\tuple{(\word[1], i^1, j^1), \dots, (\word[k], i^k, j^k)} \mid \tuple{\slice{\word[1]}{i^1}{j^1}, \slice{\word[2]}{i^2}{j^2}, \dots, \slice{\word[k]}{i^k}{j^k}} \in \hyperLg(\NAA) \big\}$}
    $\Match \gets \emptyset$\;
    \LongVersion{\tcp{Priority queue of the beginning of the matching trials}}
    $\waitingQueue \gets \{\tuple{i_1, \dots,i_k, \word[1], \dots, \word[k]} \mid \forall m \in \Vars.\, \word[m] \in \wordSet, 1 \leq i_m \leq \cardinality{\word[m]} - \ShortestMatching{m} + 1\}$ \;\label{algorithm:FJS:initialize}
    \While{$\cardinality{\waitingQueue} > 0$} {
        \LongVersion{\tcp{Pop the ``smallest'' element from the priority queue}}
        \KwPop{} $\tuple{i_1, \dots,i_k, \word[1], \dots, \word[k]}$ \KwFrom{} $\waitingQueue$ \;
        \If(\tcp*[f]{QS-style skipping}) {$\exists m \in \Vars.\, \wordi[m]{i_m + \ShortestMatching{m} - 1} \not\in \LastQS{m}$\label{algorithm:FJS:QS_test}} {
            \tcp{Remove the skipped starting indices}
            \For{$m \in \Vars$ satisfying $\wordi[m]{i_m + \ShortestMatching{m} - 1} \not\in \LastQS{m}$} {
                $\waitingQueue \gets \{ \tuple{i'_1, \dots,i'_k, \worda[1], \dots, \worda[k]} \in \waitingQueue \mid \forall m \in \Vars.\, \worda[m] = \word[m] \implies i'_m < i_m \lor i'_m \geq i_m + \SkipQS^m(\wordi[m]{i_m + \ShortestMatching{m}})\}$\label{algorithm:FJS:QS_remove}\;
            }
            \KwContinue{}
        }
        $p_1, p_2, \dots, p_k  \gets i_1, i_2, \dots, i_k$;\,
        $\currentConfigurations \gets \{\tuple{\wordi[1]{i_1}, \wordi[2]{i_2}, \dots, \wordi[k]{i_k}, \initLoc} \mid \initLoc \in \InitLoc\}$\;
        \LongVersion{\tcp*[f]{Start new matching trials}}
        $\reachedLoc \gets \InitLoc$ \tcp{Reached locations}
        \While{$\currentConfigurations \neq \emptyset \land \exists m \in \Vars.\, p_m < |\word[m]|$} {
            \For{$m \in \{1, 2, \dots, k\}$ satisfying $p_m < |\word[m]|$} {
            \ShortVersion{Update $\currentConfigurations$ and $\Match$
			\tcp{Same as \crefrange{algorithm:naive:setpm+1}{algorithm:naive:match} of \cref{algorithm:naive}}}
                \LongVersion{\LongVersion{\tcp{Read the $(p_m + 1)$-th letter ${\wordi[m]{p_m + 1}}$ of $\word[m]$}}
                $p_m \gets p_m + 1$\;
                \LongVersion{\tcp{Append the read letter to each configuration}}
                $\currentConfigurations \gets \{ \assign{\tuple{\worda[1], \worda[2], \dots, \worda[k], \loc}}{\worda[m]}{\worda[m] \cdot {\wordi[m]{p_m}}} \mid \tuple{\worda[1], \worda[2], \dots, \worda[k], \loc} \in \currentConfigurations\}$\;
                \For{$ \tuple{\worda[1], \worda[2], \dots, \worda[k], \loc} \in \currentConfigurations$}{
                    \LongVersion{\tcp{Apply transitions}}
                    $\currentConfigurations \gets \{\tuple{\wordb[1], \dots, \wordb[k], \loc'} \mid \tuple{\worda[1], \dots, \worda[k], \loc} \to^+ \tuple{\wordb[1], \dots, \wordb[k], \loc'}\} \cup \currentConfigurations$
                }
                \For{$ \tuple{\worda[1], \worda[2], \dots, \worda[k], \loc} \in \currentConfigurations$}{
                    \If(\tcp*[f]{Detect matching and update $\Match$}) {$\loc \in \FinalLoc$} {
                        \KwPush{} $\tuple{(\word[1], i_1, p_1 - |v_i|), \dots, (\word[k], i_k, p_k - |v_k|)}$ \KwTo{} $\Match$\;
                    }
                }}
                \LongVersion{\tcp{Update the reached locations}}
                $\reachedLoc \gets \reachedLoc \cup \{\loc \mid \tuple{\worda[1], \worda[2], \dots, \worda[k], \loc} \in \currentConfigurations\}$\;\label{algorithm:FJS:add_reached_loc}
                \LongVersion{\tcp{Remove the non-waiting configurations}}
                $\currentConfigurations \gets \{ \tuple{\worda[1], \worda[2], \dots, \worda[k], \loc} \in \currentConfigurations \mid \exists m \in \Vars.\, \worda[m] = \emptyword \}$
            }
         }
         \For(\tcp*[f]{KMP-style skipping}) {$\loc \in \reachedLoc$} {
            $\waitingQueue \gets \big\{ \tuple{i'_1, \dots,i'_k, \worda[1], \dots, \worda[k]} \in \waitingQueue \mid (\forall m \in \Vars.\, \worda[m] = \word[m]) \implies (\forall m \in \Vars.\, i'_m \leq i_m \lor i'_m \geq i_m + \SkipKMP^m(\loc)) \big\}$\;\label{algorithm:FJS:KMP_remove}
        }
    }
    \KwRet{$\Match$}\;
\end{algorithm}

\cref{algorithm:FJS} outlines our FJS-style algorithm for hyper pattern matching.
In \cref{algorithm:FJS}, the QS-style skipping is used so that:
i) we first test if the $\ShortestMatching{m}$-th letter of $\word[m]$ is in~$\LastQS{m}$ (\cref{algorithm:FJS:QS_test}) and
ii) if it is not in~$\LastQS{m}$, we remove the skipped starting indices from the waiting queue using $\SkipQS^m$ (\cref{algorithm:FJS:QS_remove}).
The KMP-style skipping is used so that:
i) we keep track of the states~$\reachedLoc$ reached during the latest matching trial (\cref{algorithm:FJS:add_reached_loc}) and
ii) we remove the unnecessary starting indices from the waiting queue using~$\SkipKMP^m$ (\cref{algorithm:FJS:KMP_remove}).
\subsection{Pruning of matching trials via projection}\label{ss:projection}

We now propose a heuristics \LongVersion{aiming at }reducing the blowup of~$\waitingQueue$ at \cref{algorithm:naive:initialize_waiting_queue} in \cref{algorithm:naive}.

\begin{example}\label{example:blowup}
	Assume a set $\wordSet = \{\word_1, \word_2, \dots, \word_k\}$ of words made of $n$ identical letters ``$\styleact{a_i}$'' ($1 \leq i \leq k$) followed by a~``$\styleact{b}$'', \ie{} $\word_i = \styleact{a_i}^n \styleact{b}$.
	Consider a pattern recognizing 3 consecutive occurrences of~``$\styleact{a_i}$'' followed by a~``$\styleact{b}$'', for each $1 \leq i \leq k$; this pattern can be easily encoded by an NAA over $k$ directions (omitted due to space concern---see \cref{appendix:example:blowup}).
	The initial $\waitingQueue$ at \cref{algorithm:naive:initialize_waiting_queue} in \cref{algorithm:naive} contains already $k^k \times (n+1)^k$ tuples; %
	even worse, each of these tuples (but one) will be explored for 3 letters in all $k$~directions until failing (technically, the tuples starting with $n+1$ and $n$, $n-1$ will explore a little less).
	But in fact, the only match will be $\Match(\NAA, \{ \word_1, \dots, \word_k \}) = \big\{ \tuple{ (\word_1, n-2, n+1) , \dots, (\word_k, n-2, n+1) } \big\}$.
\end{example}

In \cref{example:blowup}, if we knew before starting \cref{algorithm:naive} that the only potential match starts at $n-2$ in each word, we would considerably reduce this initial blowup.
This is the heuristics we propose here: instead of starting \cref{algorithm:naive} with any potential starting position, we first filter out positions in each word by keeping only these matching the pattern projected on the current direction.

\begin{definition}[Projection of an NAA]
  Given an NAA $\NAA = \tuple{\Alphabet, \Vars, \Loc, \InitLoc, \Edges, \FinalLoc}$ and given $\var \in \Vars$,
  we let $\project{\NAA}{\var}$ be the DFA %
 constructed by
  \begin{ienumerate}%
  \item replacing any transition $\tuple{a, {\var}'}$ with ``$a$'' whenever ${\var}' = \var$,
  \item replacing any transition $\tuple{a, {\var}'}$ with ``$\silentaction$'' whenever ${\var}' \neq \var$, and
  \item removing in the obtained automaton any transition labeled by~$\silentaction$, \eg{} using the powerset construction~\cite{HMU07}.
  \end{ienumerate}%
\end{definition}
\begin{example}\label{example:projection-NAA}
	Consider again $\NAA$ in \cref{figure:example}.
	We give $\project{\NAA}{\var[1]}$ and $\project{\NAA}{\var[2]}$ in \cref{figure:example-projected}.
\end{example}
\begin{figure}[tb]
	\centering
	\begin{subfigure}[b]{0.48\textwidth}
	    \centering
    \scalebox{0.8}{\begin{tikzpicture}[NFA, scale=1, yscale=1, node distance=1.5cm]
    \node[state, initial] (init) {$\loci{0}$};
    \node[state, right=of init] (p1) {$\loci{1}$};
    \node[state, right=of p1, final] (final) {$\locfinal$};

    \path [NFApath]
        (init) edge node {$\styleact{\startend}$} (p1) %
        (p1) edge[loop above] node[right] {$\styleact{a}$} (p1) %
        (p1) edge[] node[] {$\styleact{\startend}$} (final) %
    ;

    \end{tikzpicture}}
		\caption{$\project{\NAA}{\var[1]}$.}
		\label{figure:example-projected:x1}
	\end{subfigure}
	\hfill{}
	\begin{subfigure}[b]{0.48\textwidth}
    \scalebox{0.8}{\begin{tikzpicture}[NFA, scale=1, yscale=1, node distance=1.5cm]
    \node[state, initial] (init) {$\loci{0}$};
    \node[state, right=of init, final] (p1) {$\locfinal$};

    \path [NFApath]
        (init) edge node {$\styleact{\startend}$} (p1) %
        (p1) edge[loop right] node[] {$\styleact{b}$} (p1) %
    ;

    \end{tikzpicture}}
		\caption{$\project{\NAA}{\var[2]}$.}
		\label{figure:example-projected:x2}
	\end{subfigure}
	\caption{Projections of~$\NAA$ from \cref{figure:example}.}
    \label{figure:example-projected}
\end{figure}
\begin{algorithm}[tb]
    \caption{Projection heuristics for hyper pattern matching.}
    \label{algorithm:projection}

    \footnotesize

    \KwIn{A finite set $\wordSet \subseteq \Alphabet^*$ of words and an NAA $\NAA = \tuple{\Alphabet, \Vars, \Loc, \InitLoc, \Edges, \FinalLoc}$ with $\Vars = \{\var[1], \var[2], \dots, \var[k]\}$
 }
    \KwOut{Priority queue of the beginning of the matching trials}

    \For{$m \in \{1, 2, \dots, k\}$ \nllabel{algorithm:projection:for}} {
        $\Match_m \gets \algoPM \big(\wordSet, \project{\NAA}{\var[m]}\big) $\nllabel{algorithm:projection:PM}\tcp*[f]{Compute matches projected on each $\var[m]$}
    }

    \tcp{Restrict the priority queue to possible matches projected on $\var[m]$}
    $\waitingQueue \gets \big\{\tuple{i_1, \dots,i_k, \word[1], \dots, \word[k]} \mid \forall m \in \{1,\dots,k\}. \exists j_m.\, \tuple{(\word[m], i_m, j_m)} \in \Match_m \big\}$\nllabel{algorithm:projection:initialize_waiting_queue}\;

\end{algorithm}

We can now define the hyper pattern matching algorithm with projection ($\algoHPMP$) as \cref{algorithm:naive} in which we replace \cref{algorithm:naive:initialize_waiting_queue} with the algorithm fragment in \cref{algorithm:projection}.
The call to~$\algoPM$ at \cref{algorithm:projection:PM} denotes classical non-hyper pattern matching on a~DFA.
Even in the worst case, this heuristics only incurs a loss of time consisting of $k$~calls to a non-hyper pattern matching algorithm, while its gain may be an exponential decrease of the computation time in hyper pattern matching.
\begin{example}\label{example:projection-NAA-algorithm}
	Consider again $\NAA$ in \cref{figure:example}, and consider again $\word = \startend{}a\startend{}b$ from \cref{example:algorithm:naive}.
	We have $\Match_1 \gets \algoHPM\big(\{ \word \}, \project{\NAA}{\var[1]}\big) = \{ \tuple{(\word, 1, 3)} \} $
	and
	$\Match_2 \gets \{ \tuple{(\word, 3, 3)} , \tuple{(\word, 3, 4)} \} $.
	Therefore, $\waitingQueue \gets \{ \tuple{1, 3, \word, \word} \}$---a singleton to be compared to the 16~elements in the initial queue in \cref{example:algorithm:naive}.
\end{example}
\begin{algorithm}[t]
    \caption{Pruning of indices irrelevant to hyper pattern matching.}%
    \label{algorithm:filtering}
    \footnotesize
    \KwIn{A word $\word \in \Alphabet^*$ and a DFA $\project{\NAA}{\var} = \tuple{\Alphabet, \Loc, \initLoc, \Edges, \FinalLoc}$.}
    \KwOut{A word $\word[\bot] \in {(\Alphabet \cup \{\bot\})}^*$ such that $\wordi[\bot]{h} = \wordi{h}$ if there are $i$ and $j$ satisfying $i \leq h \leq j$ and $\slice{\word}{i}{j} \in \Lg(\DFA)$, and $\wordi[\bot]{h} = \bot$ otherwise.}
    $\word[\bot] \gets \emptyword$;\, $U \gets \emptyset$\;
    \tcp{$\configuration$ maintains the minimum $i$ s.t.\ we reach $\loc \in \Loc$ by using $\slice{\word}{i}{j}$}
    \KwLet{} $\configuration\colon \Loc \to \setN \cup \{\bot\}$ such that $\configuration(\loc) = \bot$ for any $\loc \in \Loc$\;
    \For{$j \in \{1, 2, \dots, \cardinality{\word}\}$ \nllabel{algorithm:filtering:for}} {
        \lIf{$\configuration(\initLoc) = \bot$} {
            $\configuration \gets \assign{\configuration}{\initLoc}{j}$
        }
        \tcp{Update $\configuration$ by applying the transition function $\Edges$}
        \KwLet{} $\configuration'\colon \Loc \to \setN \cup \{\bot\}$ such that $\configuration'(\loc) = \bot$ for any $\loc \in \Loc$\;
        \For{$\loc \in \Loc$ satisfying $\configuration(\loc) \neq \bot$} {\nllabel{algorithm:filtering:update:begin}
            $\configuration' \gets \assign{\configuration'}{\Edges(\loc, \wordi{j})}{\min\{\configuration(\loc), \configuration'(\Edges(\loc, \wordi{j}))\}}$\;
        }\nllabel{algorithm:filtering:update:end}
        $\configuration \gets \configuration'$\;
        \tcp{Update $U$ using the matching we currently have}
        \For{$\locfinal \in \FinalLoc$ satisfying $\configuration(\locfinal) \neq \bot$} {\nllabel{algorithm:filtering:final:begin}
            $U \gets U \cup \{h \mid \configuration(\locfinal) \leq h \leq j\}$\;\nllabel{algorithm:filtering:final}
        }\nllabel{algorithm:filtering:final:end}
        \tcp{Append the already determined filtering result to $\word[\bot]$}
        \For{$h \in \{\cardinality{\word[\bot]} + 1, \cardinality{\word[\bot]} + 2, \dots, \min_{\loc \in \Loc, \configuration(\loc) \neq \bot} \configuration(\loc) - 1\}$} {\nllabel{algorithm:filtering:output:begin}
            \leIf{$h \in U$}{
                $\wordi[\bot]{h} \gets \wordi{h}$
            } {
                $\wordi[\bot]{h} \gets \bot$
            }
        }\nllabel{algorithm:filtering:output:end}
    }
    \For{$h \in \{\cardinality{\word[\bot]} + 1, \cardinality{\word[\bot]} + 2, \dots, \cardinality{\word}\}$} {\nllabel{algorithm:filtering:final_output:begin}
        \leIf{$h \in U$}{
            $\wordi[\bot]{h} \gets \wordi{h}$
        } {
            $\wordi[\bot]{h} \gets \bot$
        }
    }\nllabel{algorithm:filtering:final_output:end}
    \Return{$\word[\bot]$}
\end{algorithm}

Although classical pattern matching in \cref{algorithm:projection} can be conducted efficiently, 
the overhead can be further reduced by overapproximating the exact matching.
For instance, one can use the set of indices appearing in some matching rather than the matching itself.
\cref{algorithm:filtering} shows an algorithm to identify such a set of indices.
More precisely, it maps a word $\word \in \Alphabet^*$ to another word $\word[\bot] \in {(\Alphabet \cup \{\bot\})}^*$, where the letters irrelevant to hyper pattern matching according to the projection $\DFA$ are replaced with $\bot$.
In \cref{algorithm:filtering}, for each state $\loc \in \Loc$ of the DFA $\DFA$, we maintain the minimum $i \in \setN$ such that we reach $\loc$ by feeding $\slice{\word}{i}{j}$ to $\DFA$ as a mapping $\configuration$, where $j$ is the index of the current letter to be examined.
In the loop at \crefrange{algorithm:filtering:update:begin}{algorithm:filtering:update:end}, we update $\configuration$ using the transition function $\Edges$.
In the loop at \crefrange{algorithm:filtering:final:begin}{algorithm:filtering:final:end}, we update the set $U$ of indices deemed relevant to hyper pattern matching according to the projection $\DFA$.
In the loop at \crefrange{algorithm:filtering:output:begin}{algorithm:filtering:output:end}, we update the resulting word $\word[\bot]$ in the range where the result is already determined.
Thanks to this incremental construction, we can start using the result of filtering as soon as possible.
In the loop from \crefrange{algorithm:filtering:final_output:begin}{algorithm:filtering:final_output:end}, we use the remaining part of~$U$ to update $\word[\bot]$.

The time complexity of \cref{algorithm:filtering} is linear in the length $\cardinality{\word}$ of the examined word.
In contrast, the time complexity of classical pattern matching is linear in the number of matches, which is at most $\cardinality{\word}^2$.
Since we run \cref{algorithm:filtering} for each\LongVersion{ pair} $(\word, \var) \in \wordSet \times \Vars$\LongVersion{ of word and direction},
the overall time complexity of projection-based pruning is in $\complexityClass(N \times \cardinality{\wordSet} \times \cardinality{\Vars})$, where $N$ is the maximum length of $\word \in \wordSet$.
This is more scalable than \cref{algorithm:naive}.

\section{Implementation and experiments}\label{section:implementation}
We implemented our algorithms for hyper pattern matching as a prototype tool \ourTool{}\footnote{\ourTool{} is distributed under\LongVersion{ the} GPLv3\LongVersion{ license} at \url{https://github.com/MasWag/hyppau}.}%
 in~Rust.
In particular, we implemented the following four algorithms:
the naive algorithm in \cref{algorithm:naive} ($\algoHPM$),
the algorithm with FJS-style skipping in \cref{algorithm:FJS} ($\algoHPMFJS$),
the naive algorithm with projection-based pruning with \cref{algorithm:filtering} ($\algoHPMP$), and
the algorithm with both FJS-style skipping and projection-based pruning ($\algoHPMFJSP$), \ie{} the initialization of the priority queue~$\waitingQueue$ at \cref{algorithm:FJS:initialize} in \cref{algorithm:FJS} is replaced with \cref{algorithm:projection}.
In \LongVersion{our implementation }\ourTool{}, the priority queue~$\waitingQueue$ is constructed in a lazy manner to reduce memory\LongVersion{ consumption}.
\LongVersion{%
    We conducted experiments to evaluate the efficiency of our algorithms.
}

\subsection{Benchmarks}
In our experiments, we used the following four benchmarks: \interference{}, \robustness{}, \packetPair{}, and \manydimensions{}.
All the benchmarks are our original work.
The NAAs in \interference{} and \robustness{} are shown in \cref{figure:example:interference_NFA,figure:example:robustness_NAA}, respectively.
\packetPair{} is a benchmark inspired by monitoring of network packets for data streams; see \cref{appendix:example:network} for details.
\manydimensions{} is an artificial benchmark to evaluate the scalability of \ourTool{} \wrtwithspace{}the number of directions.
The NAAs of \manydimensions{} are obtained by generalizing the NAA in \cref{figure:example}.
In all the benchmarks, we randomly generated the set of input words.

\subsection{Experiments}

We used \interference{}, \robustness{}, and \packetPair{} to observe the scalability of \ourTool{} \wrtwithspace{}the word length and the number of words, whereas
we used \manydimensions{} to observe the scalability \wrtwithspace{}the number of directions.
To observe the scalability \wrtwithspace{}the word length,
we measured the execution time \LongVersion{of \ourTool{} }using words of different lengths.
We randomly generated words of length \numrange{500}{5000}, \numrange{200}{2000}, and \numrange{1000}{10000} for \interference{}, \robustness{}, and \packetPair{}, respectively.
To observe the scalability \wrtwithspace{}the number of words,
we measured the execution time \LongVersion{of \ourTool{} }using multiple words of the same length.
We randomly generated \numrange{2}{10} words of length \numlist{500;200;1000} for \interference{}, \robustness{}, and \packetPair{}, respectively.
To observe the scalability \wrtwithspace{}the number of directions,
we measured the execution time \LongVersion{of \ourTool{} }using NAAs of different numbers of directions and words of the same length.
We generated NAAs of directions \numrange{2}{4} and randomly generated words of length~\num{200}.
We ran each of the above configurations 10 times.
We report the average execution time.
We ran all the experiments on a computing server with Intel Xeon w5-3435X \qty{4.5}{\giga\hertz} \qty{63}{\gibi\byte} RAM that runs Ubuntu 24.04.2 LTS.\@
We set the timeout to \num{1800}~seconds.

\subsection{Results and discussions}
\begin{figure}[t]
 \begin{subfigure}[b]{0.30\textwidth}
  \centering
  \scalebox{0.\ShortVersion{4}5}{\input{figures/interference-length.pgf}}
  \caption{\interference{}.}%
  \label{figure:results:length_time:interference}
 \end{subfigure}
 \hfill
 \begin{subfigure}[b]{0.30\textwidth}
  \centering
  \scalebox{0.\ShortVersion{4}5}{\input{figures/stuttering_robustness-length.pgf}}
  \caption{\robustness{}.}%
  \label{figure:results:length_time:robustness}
 \end{subfigure}
 \hfill
 \begin{subfigure}[b]{0.30\textwidth}
  \centering
  \scalebox{0.\ShortVersion{4}5}{\input{figures/network_pair-length.pgf}}
  \caption{\packetPair{}.}%
  \label{figure:results:length_time:packetpair}
 \end{subfigure}
 \caption{Elapsed time with respect to the length of the monitored words.}
 \label{figure:results:length_time}
 \begin{subfigure}[b]{0.30\textwidth}
  \centering
  \scalebox{0.\ShortVersion{4}5}{\input{figures/interference-num.pgf}}
  \caption{\interference{}.}%
  \label{figure:results:number_time:interference}
 \end{subfigure}
 \hfill
 \begin{subfigure}[b]{0.30\textwidth}
  \centering
  \scalebox{0.\ShortVersion{4}5}{\input{figures/stuttering_robustness-num.pgf}}
  \caption{\robustness{}.}%
  \label{figure:results:number_time:robustness}
 \end{subfigure}
 \hfill
 \begin{subfigure}[b]{0.30\textwidth}
  \centering
  \scalebox{0.\ShortVersion{4}5}{\input{figures/network_pair-num.pgf}}
  \caption{\packetPair{}.}%
  \label{figure:results:number_time:packetpair}
 \end{subfigure}
 \caption{Elapsed time with respect to the number of words to be monitored.}
 \label{figure:results:number_time}
\end{figure}

\cref{figure:results:length_time,figure:results:number_time} show the elapsed time with respect to the length and the number of monitored words, respectively.
In \cref{figure:results:length_time,figure:results:number_time},
we observe that for \interference{} and \robustness{}, the execution time of $\algoHPMFJS$ is slightly longer than that of $\algoHPM$, while for \packetPair{}, the execution time of $\algoHPMFJS$ is much shorter than that of $\algoHPM$.
This is because for \interference{} and \robustness{}, the skip values $\SkipKMP$ and $\SkipQS$ are at most 1, and we have no performance gain from skipping.
Due to the overhead in the use of skip values, $\algoHPMFJS$ is slightly slower than $\algoHPM$.
It is also possible to minimize this overhead by switching from $\algoHPMFJS$ to $\algoHPM$ when $\SkipKMP$ and $\SkipQS$  are 1 since we compute them beforehand.
In contrast, for \packetPair{}, the skip values $\SkipKMP$ and $\SkipQS$ are 2 for many inputs and states, and $\algoHPMFJS$ is much more efficient than $\algoHPM$ by skipping unnecessary matching trials.

In \cref{figure:results:length_time,figure:results:number_time},
we also observe a similar trend for $\algoHPM$ and $\algoHPMP$.
This is because, for \interference{} and \robustness{}, no letters in monitored words can be filtered out solely based on the projection, due to the comparison of letters observed in one word with another.
In contrast, for \packetPair{}, some letters can be filtered out because each matching for $\var[1]$ (\resp{} $\var[2]$) must start and end with $\styleactQ{s^Q}$ and $\styleactQ{e^Q}$ (\resp{} $\styleactP{s^P}$ and $\styleactP{e^P}$), and the letters outside these ranges can be filtered out.
In \cref{figure:results:number_time:packetpair}, we also observe performance gain by using both FJS-style skipping and projection-based pruning, \ie{} $\algoHPMFJSP$, compared to $\algoHPMFJS$ and $\algoHPMP$.

\begin{table}[tb]
 \caption{Elapsed time (in seconds) with respect to the number of directions for \manydimensions{}. \cellTimeout{} denotes an execution exceeding the timeout of \num{1800}~seconds.}
 \label{table:results:dimensions}
 \centering
 \footnotesize
 \begin{tabular}{lrrrr}
  \toprule
  & \algoHPM{} & \algoHPMFJS{} & \algoHPMP{} & \algoHPMFJSP{} \\
  \midrule
  $\cardinality{\Vars} = 2$ & 0.03 & 0.01 & 0.02 & 0.02 \\
  $\cardinality{\Vars} = 3$ & 4.65 & 0.95 & 1.34 & 1.12 \\
  $\cardinality{\Vars} = 4$ & \cellTimeout{} & 95.48 & 140.42 & 130.05 \\
  \bottomrule
 \end{tabular}
\end{table}
\cref{table:results:dimensions} shows the elapsed time with respect to the number of directions.
We observe that the performance gain from our heuristics is much more evident when the number of directions is large.
This is because filtering out one letter for one direction allows us to skip all the matching trials that include that letter for the direction, which also has a combinatorial explosion with respect to the number of directions.
Therefore, we conclude that our heuristics in \cref{section:heuristics} can improve the efficiency of hyper pattern matching, particularly when the number of directions is large.

Overall, although the scalability with respect to the number of directions is not good as suggested by \cref{theorem:emptiness_matching_np_complete},
one can conduct hyper pattern matching for words with thousands of letters within one minute when the number of directions is two.
Moreover, even when there are four directions, hyper pattern matching can be conducted within a few minutes for words of length 200.
Although these results may look restrictive, we believe that this is sufficient for most of the realistic use cases.
For instance, all the hyperproperties in~\cite{FRS15,BF23} bind at most two words, \ie{} can be encoded using at most two directions.

\section{Conclusions and future perspectives}\label{section:conclusion}

Toward more informative monitoring of hyperproperties, we introduced hyper pattern matching with nondeterministic asynchronous finite automata (NAAs) for representing hyperlanguages.
In addition to a naive algorithm, we developed two heuristics, FJS-style skipping and projection-based pruning, to improve its efficiency.
We evaluated the problem from both theoretical and empirical perspectives:
theoretically, we proved it is \NP{}-complete to decide the nonemptiness of the match set\LongVersion{ $\Match(\NAA, \wordSet)$};
empirically, our experimental results demonstrate that the match set can be computed for words with thousands of letters within one minute, which is likely useful for monitoring of reasonable %
 size of data.

One future direction is to generalize the problem, \eg{} to handle properties with timing constraints as in the context of \emph{timed pattern matching}~\cite{UFAM14,WHS17,BFNMA18,Waga19,WAH23}.
Another future direction is to investigate approximate algorithms\ifdefined\VersionLong, \else\footnote{\fi%
    \ie{} identifying matches within a certain threshold of errors~\cite{EGGS25}, \eg{} measured using edit distance or Hamming distance\LongVersion{,}
\ifdefined\VersionLong\else}\fi with better complexity.
\subsubsection*{Acknowledgements.}

This work is partially supported by JST PRESTO (JPMJPR22CA), JST BOOST (JPMJBY24H8), JSPS KAKENHI (22K17873), ANR BisoUS (ANR-22-CE48-0012) and ANR TAPAS (PRC ANR-24-CE25-5742).
\ifdefined\VersionLong
	\newcommand{\CCIS}{Communications in Computer and Information Science}
	\newcommand{\ENTCS}{Electronic Notes in Theoretical Computer Science}
	\newcommand{\FAC}{Formal Aspects of Computing}
	\newcommand{\FundInf}{Fundamenta Informaticae}
	\newcommand{\FMSD}{Formal Methods in System Design}
	\newcommand{\IJFCS}{International Journal of Foundations of Computer Science}
	\newcommand{\IJSSE}{International Journal of Secure Software Engineering}
	\newcommand{\IPL}{Information Processing Letters}
	\newcommand{\JAIR}{Journal of Artificial Intelligence Research}
	\newcommand{\JLAP}{Journal of Logic and Algebraic Programming}
	\newcommand{\JLAMP}{Journal of Logical and Algebraic Methods in Programming} %
	\newcommand{\JLC}{Journal of Logic and Computation}
	\newcommand{\LMCS}{Logical Methods in Computer Science}
	\newcommand{\LNCS}{Lecture Notes in Computer Science}
	\newcommand{\RESS}{Reliability Engineering \& System Safety}
	\newcommand{\RTS}{Real-Time Systems}
	\newcommand{\SCP}{Science of Computer Programming}
	\newcommand{\SOSYM}{Software and Systems Modeling ({SoSyM})}
	\newcommand{\STTT}{International Journal on Software Tools for Technology Transfer}
	\newcommand{\TCS}{Theoretical Computer Science}
	\newcommand{\TOPLAS}{{ACM} Transactions on Programming Languages and Systems ({ToPLAS})}
	\newcommand{\ToPNoC}{Transactions on {P}etri Nets and Other Models of Concurrency}
	\newcommand{\TOSEM}{{ACM} Transactions on Software Engineering and Methodology ({ToSEM})}
	\newcommand{\TSE}{{IEEE} Transactions on Software Engineering}
\else
	\newcommand{\CCIS}{CCIS}
	\newcommand{\ENTCS}{ENTCS}
	\newcommand{\FAC}{FAC}
	\newcommand{\FundInf}{FI}
	\newcommand{\FMSD}{FMSD}
	\newcommand{\IJFCS}{IJFCS}
	\newcommand{\IJSSE}{IJSSE}
	\newcommand{\IPL}{IPL}
	\newcommand{\JAIR}{JAIR}
	\newcommand{\JLAP}{JLAP}
	\newcommand{\JLAMP}{JLAMP}
	\newcommand{\JLC}{JLC}
	\newcommand{\LMCS}{LMCS}
	\newcommand{\LNCS}{LNCS}
	\newcommand{\RESS}{RESS}
	\newcommand{\RTS}{RTS}
	\newcommand{\SCP}{SCP}
	\newcommand{\SOSYM}{{SoSyM}}
	\newcommand{\STTT}{STTT}
	\newcommand{\TCS}{TCS}
	\newcommand{\TOPLAS}{ToPLAS}
	\newcommand{\ToPNoC}{ToPNOC}
	\newcommand{\TOSEM}{ToSEM}
	\newcommand{\TSE}{TSE}
\fi

\newpage

\ifdefined\VersionAuthor
	\renewcommand*{\bibfont}{\small}
	\printbibliography[title={References}]
\else
	\bibliography{biblio}
\fi

\begin{AuthorVersionBlock}
 \newpage
 \appendix

 \begin{center}
	\LARGE\bfseries
	Appendix
 \end{center}

 \section{Omitted Proofs}
 \subsection{Proof of \cref{theorem:emptiness_matching_np_complete}}\label{ss:proof:theorem:emptiness_matching_np_complete}
 \theoremMatchSetNPcomplete*

 The proof comes immediately from the following two theorems.

 \begin{restatable}{theorem}{theoremMatchSetNPhard}
 \label{theorem:emptiness_matching_np_hard}
 The nonemptiness decision problem for the match set $\Match(\NAA, \wordSet)$ for an NAA $\NAA$ and a finite set $\wordSet$ of words is \NP{}-hard.
 \end{restatable}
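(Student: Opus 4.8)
The plan is to reduce from 3-\textsc{Sat}. Given a CNF formula $\varphi$ over variables $x_1, \dots, x_n$ with clauses $C_1, \dots, C_m$, I would construct in polynomial time a finite word set $\wordSet$ and an NAA $\NAA$ with \emph{one direction per variable}, so that $\Match(\NAA, \wordSet) \neq \emptyset$ iff $\varphi$ is satisfiable (this also makes $k = n$ grow with the problem, matching the exponential dependence on $k$ of our algorithms). Concretely, take $\Alphabet = \{t, f, \#\}$, set $\Vars = \{\var[1], \dots, \var[n]\}$, and let $\wordSet = \{\word\}$ be the singleton containing $\word = t^m \# f^m$. The only $\#$-free subwords of $\word$ of length exactly $m$ are $\slice{\word}{1}{m} = t^m$ and $\slice{\word}{m+2}{2m+1} = f^m$; since $\NAA$ will have no transition on $\#$, the subword assigned to each direction $\var[l]$ is forced to be uniformly $t^m$ (read as ``$x_l$ true'') or $f^m$ (``$x_l$ false''). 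A choice of intervals thus encodes exactly a truth assignment~$\mu$.

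The underlying NFA $\asNFA{\NAA}$ over $\Alphabet \times \Vars$ would be a chain of $m$ \emph{clause gadgets}, one per clause, each consuming the $j$-th letter of all $n$ directions in the fixed order $\var[1], \dots, \var[n]$. Gadget $j$ branches nondeterministically into three sub-chains, one per literal of $C_j$; in the sub-chain for a literal on variable $x_a$ the transition reading direction $\var[a]$ is forced to the witnessing letter (namely $t$ for a positive literal, $f$ for a negative one), while the transitions for every other direction accept both $t$ and $f$. Hence gadget $j$ can be traversed iff at least one literal of $C_j$ is true under $\mu$. After the $m$-th gadget the run reaches the unique accepting state, having consumed exactly $m$ letters---the full chosen subword---from every direction. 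Each gadget has $\complexityClass(n)$ states, so $\NAA$ has $\complexityClass(mn)$ states and is built in polynomial time.

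Correctness then follows in both directions. For a fixed choice of intervals encoding $\mu$, an accepting run of $\asNFA{\NAA}$ on some $\extendedword$ with $\project{\extendedword}{\var[l]}$ equal to the chosen subword for every $l$ exists iff every clause is satisfied by $\mu$ (the sub-chain selected in gadget $j$ names a satisfying literal), i.e.\ iff $\mu \models \varphi$. Conversely, any accepting $\extendedword$ projects onto uniform subwords (only $t^m$ and $f^m$ are available), hence yields such a $\mu$. Therefore $\Match(\NAA, \wordSet) \neq \emptyset$ precisely when $\varphi$ is satisfiable, establishing \NP{}-hardness.

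The main obstacle---and the point needing the most care---is guaranteeing \emph{global consistency} of each variable's truth value across all $m$ clause gadgets: a gadget only constrains the directions of the variables occurring in its clause, so on its own nothing forces direction $\var[l]$ to read the same letter everywhere. The word $t^m \# f^m$ together with the ``$\#$-free, length-exactly-$m$'' interval requirement resolves this, because it is the \emph{input} rather than the automaton that rules out mixed subwords, keeping $\NAA$ of polynomial size. In writing the full proof I would also verify the boundary bookkeeping: that only these two intervals survive, that intervals of length $\neq m$ cannot complete a run (the projection must equal the subword exactly), and that reusing the single word $\word$ in all $n$ directions is permitted by the definition of $\Match(\NAA, \wordSet)$.
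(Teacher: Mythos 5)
Your reduction is correct: the automaton consumes exactly $m$ letters per direction and never reads $\#$, so the only admissible subwords of $t^m\#f^m$ are $t^m$ and $f^m$, which does force a consistent truth assignment, and the clause gadgets then check satisfaction; both directions of the equivalence go through. The paper also reduces from CNF satisfiability with one direction per propositional variable, but it enforces consistency in a simpler place: it takes $\Alphabet=\{\top,\bot\}$ and lets $\wordSet$ consist of all \emph{uniform} words $\sigma^i$ ($\sigma\in\Alphabet$, $i\le n$), so that whichever word a direction is bound to already fixes the variable's value; the NAA is then just a chain $\loci{0},\dots,\loci{n}$ with one transition per literal, each clause being traversed by reading a \emph{single} letter from the direction of one chosen literal. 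This makes the automaton have $n{+}1$ states and as many transitions as literals, versus your $\complexityClass(mn)$ states, and avoids the separator letter and the careful bookkeeping about which intervals survive. What your variant buys in exchange is hardness already for a \emph{singleton} word set $\wordSet=\{\word\}$, since you push the consistency constraint into the structure of a single word rather than into the choice among several words; the paper's version instead relies on having $2n$ candidate words available. Both are polynomial-time reductions from an \NP{}-complete problem, so either establishes the theorem.
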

 \def\myqed{\qed}%
 \begin{restatable}{theorem}{theoremMatchSetNP}
 \label{theorem:emptiness_matching_np_easy}%
 The nonemptiness decision problem for the match set $\Match(\NAA, \wordSet)$ for an NAA~$\NAA$ and a finite set~$\wordSet$ of words is in~\NP{}.
 \qed{}
 \end{restatable}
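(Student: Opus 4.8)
The plan is to exhibit a polynomial-size certificate that can be checked in polynomial time, and thereby place the nonemptiness problem in \NP{}. By definition, $\Match(\NAA, \wordSet)$ is nonempty iff there are words $\word[1], \dots, \word[k] \in \wordSet$ and intervals $(i^1, j^1), \dots, (i^k, j^k)$ with $\tuple{\slice{\word[1]}{i^1}{j^1}, \dots, \slice{\word[k]}{i^k}{j^k}} \in \hyperLg(\NAA)$. Unfolding the definition of $\hyperLg(\NAA)$, this holds iff there is a word $\extendedword \in {(\Alphabet \times \Vars)}^*$ accepted by the underlying NFA $\asNFA{\NAA}$ with $\project{\extendedword}{\var[m]} = \slice{\word[m]}{i^m}{j^m}$ for every~$m$. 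I therefore take as certificate all of this data at once: the indices selecting $\word[1], \dots, \word[k]$ in~$\wordSet$, the $2k$ interval endpoints, and an accepting run $\varrun = (s_0, s_1, \dots, s_\ell)$ of $\asNFA{\NAA}$ together with its labeling word $\extendedword = \tuple{a_1, \gamma_1} \cdots \tuple{a_\ell, \gamma_\ell}$.

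First I would bound the certificate size. The endpoints are naturals bounded by $N = \max_{\word \in \wordSet} \cardinality{\word}$, so they need $O(k \log N)$ bits. The crucial point is the length $\ell$ of $\extendedword$, and hence of $\varrun$. Since $\asNFA{\NAA}$ reads over the alphabet $\Alphabet \times \Vars$ and every transition consumes exactly one letter (there are no silent moves), the run has exactly $\ell + 1$ states, and each letter of $\extendedword$ contributes to exactly one directional projection. Consequently $\ell = \sum_{m=1}^{k} \cardinality{\project{\extendedword}{\var[m]}} = \sum_{m=1}^{k} (j^m - i^m + 1) \leq k \cdot N$. Thus $\extendedword$ and $\varrun$ have length at most $kN + 1$ over the finite alphabets $\Alphabet \times \Vars$ and $\Loc$, so the whole certificate is polynomial in the size of the input.

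Next I would give the polynomial-time verifier, which checks two conditions. \emph{(i)} That $\varrun$ is an accepting run of $\asNFA{\NAA}$ on $\extendedword$: namely $s_0 \in \InitLoc$, $s_\ell \in \FinalLoc$, and $(s_{t-1}, \tuple{a_t, \gamma_t}, s_t) \in \Edges$ for each $t \in \{1, \dots, \ell\}$; each is a membership test in $\Edges$, and there are $\ell \leq kN$ of them. \emph{(ii)} That the projections agree with the chosen subwords: compute $\project{\extendedword}{\var[m]}$ by scanning $\extendedword$ once (keeping the letters with direction $\var[m]$ and dropping the direction component) and compare it letter-by-letter with $\slice{\word[m]}{i^m}{j^m}$, for each~$m$. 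All of this runs in time polynomial in $k$, $N$, $\cardinality{\Loc}$ and $\cardinality{\Edges}$. Correctness is immediate, since such a certificate exists precisely when some interval-tuple lies in the match set, which is exactly the expanded definition of $\hyperLg(\NAA)$. Hence nonemptiness of $\Match(\NAA, \wordSet)$ is in \NP{}.

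As for the main difficulty, there is no deep obstacle here; the only point requiring care is the length bound on the guessed run. One must observe that, because $k$ is part of the input, verifying $\tuple{\slice{\word[1]}{i^1}{j^1}, \dots, \slice{\word[k]}{i^k}{j^k}} \in \hyperLg(\NAA)$ directly (e.g.\ by a product construction over the $k$ words) is exponential in~$k$, so membership cannot simply be checked after guessing only the words and intervals; instead the interleaving witness $\extendedword$ must itself be \emph{folded into} the nondeterministic guess. Once this is done, the bound $\ell \leq kN$---which hinges on $\asNFA{\NAA}$ being letter-consuming with no $\varepsilon$-moves---makes the certificate polynomial and the remaining argument routine.
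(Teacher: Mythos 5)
Your proposal is correct and follows essentially the same route as the paper's proof: guess the words, the interval endpoints, and the interleaved witness $\extendedword$ (whose length is the sum of the subword lengths, hence polynomial), then verify acceptance by $\asNFA{\NAA}$ and the per-direction projections in polynomial time. The only cosmetic difference is that you also guess the accepting run explicitly, whereas the paper simply notes that NFA membership of $\extendedword$ can be decided deterministically in polynomial time; both are fine.
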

 \subsubsection{Proof of \cref{theorem:emptiness_matching_np_hard}}
 \theoremMatchSetNPhard*
 \begin{proof}
 \begin{figure}[h]
  \centering
  \scalebox{0.80}{\begin{tikzpicture}[NFA, scale=1, yscale=1, node distance=2cm]
   \node[state, initial] (init) {$\loci{0}$};
   \node[state, right=of init] (p1) {$\loci{1}$};
   \node[state, right=of p1] (p2) {$\loci{2}$};
   \node[state, right=of p2,final] (p3) {$\loci{3}$};

   \path [NFApath]
   (init) edge[bend left=15] node[pos=0.5] {$\tuple{\styleact{\top}, \var[1]}$} (p1)
   (init) edge[bend right=15] node[pos=0.5,below] {$\tuple{\styleact{\bot}, \var[2]}$} (p1)
   (p1) edge[bend left=15] node {$\tuple{\styleact{\top}, \var[2]}$} (p2)
   (p1) edge[bend right=15] node[below] {$\tuple{\styleact{\bot}, \var[3]}$} (p2)
   (p2) edge[bend left=15] node {$\tuple{\styleact{\top}, \var[3]}$} (p3)
   (p2) edge[bend right=15] node[below] {$\tuple{\styleact{\bot}, \var[1]}$} (p3)
   ;
  \end{tikzpicture}}
  \caption{The NAA $\NAA$ constructed when reducing satisfiability checking of $\fml = (\prop[1] \lor \neg \prop[2]) \land (\prop[2] \lor \neg \prop[3]) \land (\prop[3] \lor \neg \prop[1])$ to nonemptiness checking of the match set.}%
  \label{figure:hpm_np_hard}
 \end{figure}
 Let $\AP = \{\prop[1], \dots, \prop[k]\}$ be a set of atomic propositions and
 let $\fml = \fml[1] \land \dots \land \fml[n]$ be a propositional formula in CNF.\@
 Let $\Vars = \{\var[1], \dots, \var[k]\}$,
 $\Alphabet = \{\top, \bot\}$\LongVersion{ (representing logical true and false)}, and
 $\wordSet = \{ \sigma^i \mid \sigma \in \Alphabet, i \leq n \}$, \ie{} the set of words of length at most $n$ containing only $\top$ or~$\bot$.
 Let $\NAA$ be the NAA over $\Alphabet$ with states $\Loc = \{\loci{0}, \dots, \loci{n}\}$
 such that there is a transition from $\loci{i-1}$ to~$\loci{i}$ labeled with $\tuple{\top, \var[j]}$ (\resp{} $\tuple{\bot, \var[j]}$) if the disjunct $\fml[i]$ contains~$\prop[j]$ (\resp{} $\neg \prop[j]$).
 Also, we let $\loci{0}$ and $\loci{n}$ be the initial and accepting states\LongVersion{, respectively}.
 See \cref{figure:hpm_np_hard} for an example.
 Since an entry of the match set is an evidence of satisfaction of~$\fml$,
 $\Match(\NAA, \wordSet)$ is nonempty if and only if ${\fml}$ is satisfiable.
 The number of transitions in $\NAA$ is the same as the number of literals in~$\fml$.
 Overall, there is a polynomial-time reduction of satisfiability checking of a propositional formula in CNF, which is \NP{}-complete~\cite{BC94},
 to nonemptiness checking of the match set.
 Thus, nonemptiness checking of the match set is \NP{}-hard.
 \qed{}
 \end{proof}
 \subsubsection{Proof of \cref{theorem:emptiness_matching_np_easy}}
 \theoremMatchSetNP*
 \begin{proof}
 [sketch]
 Let $\NAA = \tuple{\Alphabet, \Vars, \Loc, \InitLoc, \Edges, \FinalLoc}$ with
 $\Vars = \{\var[1], \var[2], \dots, \var[k]\}$.
 First, we nondeterministically pick words $\word[1], \word[2], \dots, \word[k] \in \wordSet$ and $i^1, i^2, \dots, i^k, j^1, j^2, \dots, j^k \in \setN$ satisfying $i^l \leq j^l \leq \cardinality{\word[l]}$ for each $l \in \Vars$.
 We have $\tuple{\slice{\word[1]}{i^1}{j^1}, \slice{\word[2]}{i^2}{j^2}, \dots, \slice{\word[k]}{i^k}{j^k}} \in \hyperLg(\NAA)$
 if and only if
 there is $\extendedword \in \Lg(\asNFA{\NAA})$ satisfying $\word[\var] = \project{\extendedword}{\var}$ for each $\var \in \Vars$, where $\asNFA{\NAA}$ is the underlying NFA of $\NAA$.
 Notice that we have $\cardinality{\extendedword} = \sum_{l \in \{1,2,\dots,k\}} \cardinality{\word[l]}$ for any such $\extendedword$ from the definition.
 We nondeterministically pick~$\extendedword$ satisfying $\word[\var] = \project{\extendedword}{\var}$ for each $\var \in \Vars$, and check whether $\extendedword \in \Lg(\NFA)$---which can be done in polynomial time.
 \qed{}
 \end{proof}
 \subsection{Proof of \cref{theorem:correctness_QS}}\label{ss:proof:theorem:correctness_QS}
 \theoremCorrectnessQSskipping*
 \begin{proof}
 Let $\word \in \wordSet$,
 $i \in \{1,2,\dots, \cardinality{\word}\}$,
 $\tuple{(\word[1], i^1, j^1), \dots, (\word[k], i^k, j^k)} \in \Match(\NAA, \wordSet)$, and
 $m \in \{1,2,\dots, k\}$ be such that
 $\ShortestMatching{m} > 0$,
 $\wordi{i + \ShortestMatching{m} - 1} \not\in \LastQS{m}$, and
 $\word[m] = \word$.
 From the definition of $\Match(\NAA, \wordSet)$,
 there is $\extendedword \in \Lg(\NFA)$ satisfying
 $\project{\extendedword}{\var[m]} = \slice{\word[m]}{i^m}{j^m}$.
 By $\wordi{i + \ShortestMatching{m} - 1} \not\in \LastQS{m}$,
 for any $\extendedword \in \Lg(\NFA)$,
 the $\ShortestMatching{m}$-th letter of $\slice{\word}{i}{\cardinality{\word}}$ cannot be the $\ShortestMatching{m}$-th letter
 of $\project{\extendedword}{\var[m]}$.
 Thus, we have $i \neq i_m$.
 If we have $\SkipQS^m(\wordi{i + \ShortestMatching{m}}) = 1$,
 $i \neq i_m$ immediately implies $i^m < i$ or  $i^m \geq i + \SkipQS^m(\wordi{i + \ShortestMatching{m}})$.

 Assume $\SkipQS^m(\wordi{i + \ShortestMatching{m}}) > 1$.
 From the definition of $\SkipQS^m(\wordi{i + \ShortestMatching{m}})$,
 for any $\extendedword \in \Lg(\NFA)$ and
 for any $j \in \big\{\ShortestMatching{m} + 2 - \SkipQS^m(\wordi{i + \ShortestMatching{m}}), \dots, \ShortestMatching{m}\big\}$,
 the $j$-th letter of $\project{\extendedword}{\var[m]}$ is not $\wordi{i + \ShortestMatching{m}}$.
 Therefore, we have
 $i^m \not\in \big\{i + \ShortestMatching{m} - \ShortestMatching{m} + 1, \dots, i + \ShortestMatching{m} - (\ShortestMatching{m} + 2 - \SkipQS^m(\wordi{i + \ShortestMatching{m}})) + 1 \big\} = \big\{i + 1, \dots, i + \SkipQS^m(\wordi{i + \ShortestMatching{m}}) - 1 \big\}$.
 Overall, we have $i^m < i$ or  $i^m \geq i + \SkipQS^m(\wordi{i + \ShortestMatching{m}})$.
 \qed{}
 \end{proof}
 \subsection{Proof of \cref{theorem:correctness_KMP}}\label{ss:proof:theorem:correctness_KMP}
 \begin{lemma}%
 \label{lemma:correctness_KMP}
 Let $\NFA = \tuple{\Alphabet \times \Vars, \Loc, \InitLoc, \Edges, \FinalLoc}$ be an NFA over $\Alphabet \times \Vars$ with $\Vars = \{\var[1], \var[2], \dots, \var[k]\}$.
 For any
 $\word \in \Alphabet^*$,
 $m \in \Vars$,
 $i, j \in \setN$,
 $\loc \in \Loc^m_{\slice{\word}{i}{j}}$,
 $i' \in \{i + 1, i + 2, \dots, i + \SkipKMP^m(\loc) - 1\}$, and
 $j' \geq i'$,
 we have $\slice{\word}{i'}{j'} \not\in \project{\Lg(\NFA)}{\var[m]}$.
 \end{lemma}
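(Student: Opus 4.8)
The plan is to argue by contradiction, exploiting the minimality built into the definition of $\SkipKMP^m(\loc)$. Suppose the conclusion fails: there exist $i' \in \{i+1, \dots, i + \SkipKMP^m(\loc) - 1\}$ and $j' \ge i'$ with $\slice{\word}{i'}{j'} \in \project{\Lg(\NFA)}{\var[m]}$. Writing $n = i' - i$, the bound $i' \le i + \SkipKMP^m(\loc) - 1$ gives $1 \le n \le \SkipKMP^m(\loc) - 1$, so in particular $n < \SkipKMP^m(\loc)$. The goal is to show that this very $n$ already satisfies the membership condition defining $\SkipKMP^m(\loc)$, contradicting its minimality.

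\textbf{The witness word.} To do so, I would exhibit a single word witnessing nonemptiness of $(\project{\Lg(\NFA_{\loc})}{\var[m]} \cdot \Alphabet^*) \cap (\Alphabet^n \cdot \project{\Lg(\NFA)}{\var[m]} \cdot \Alphabet^*)$. The natural candidate is $w = \slice{\word}{i}{\ell}$ with $\ell = \max(j, j')$, i.e.\ the subword of $\word$ starting at the matched position $i$ and extended far enough to cover both ends $j$ and $j'$. The two membership checks then reduce to reading off prefixes. First, $\loc \in \Loc^m_{\slice{\word}{i}{j}}$ means $\slice{\word}{i}{j} \in \project{\Lg(\NFA_{\loc})}{\var[m]}$, and since $j \le \ell$, the word $w$ factors as $\slice{\word}{i}{j}$ followed by a suffix in $\Alphabet^*$, placing $w \in \project{\Lg(\NFA_{\loc})}{\var[m]} \cdot \Alphabet^*$. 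Second, peeling off the first $n$ letters $\slice{\word}{i}{i+n-1} \in \Alphabet^n$ leaves $\slice{\word}{i'}{\ell}$ (using $i' = i+n$), which has the prefix $\slice{\word}{i'}{j'} \in \project{\Lg(\NFA)}{\var[m]}$ because $j' \le \ell$, placing $w \in \Alphabet^n \cdot \project{\Lg(\NFA)}{\var[m]} \cdot \Alphabet^*$.

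\textbf{Closing the argument.} With both memberships established, $n$ lies in the set whose minimum is $\SkipKMP^m(\loc)$, whence $\SkipKMP^m(\loc) \le n$, contradicting $n < \SkipKMP^m(\loc)$. I would conclude that no such $i', j'$ exist, which is precisely the claim. (The lemma then feeds directly into \cref{theorem:correctness_KMP}, since the states reached during a matching trial from $i$ are exactly those $\loc$ with $\loc \in \Loc^m_{\slice{\word}{i}{j}}$ for the consumed prefix.)

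\textbf{Main obstacle.} The substantive content is all in the witness construction; the one place demanding care is the index bookkeeping around the shift by $n$. I must verify that the first-$n$-letter prefix of $w = \slice{\word}{i}{\ell}$ is exactly $\slice{\word}{i}{i+n-1}$ (of length $n$) and that the remainder is exactly $\slice{\word}{i'}{\ell}$ with $i' = i+n$; getting this off-by-one alignment right, checking the vacuous boundary case $\SkipKMP^m(\loc) = 1$ (where the index set $\{i+1, \dots, i\}$ is empty), and noting well-definedness of $\slice{\word}{i}{\ell}$ under $j' \le \cardinality{\word}$, are the only delicate points.
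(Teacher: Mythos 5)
Your proof is correct and follows essentially the same route as the paper's: contradiction via the minimality of $\SkipKMP^m(\loc)$, with a single witness word starting at position $i$ exhibiting nonemptiness of $\big(\project{\Lg(\NFA_{\loc})}{\var[m]} \cdot \Alphabet^*\big) \cap \big(\Alphabet^n \cdot \project{\Lg(\NFA)}{\var[m]} \cdot \Alphabet^*\big)$ for $n = i' - i$. The only (immaterial) difference is that the paper uses the full suffix $\slice{\word}{i}{\cardinality{\word}}$ as the witness rather than truncating at $\max(j, j')$.
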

 \begin{proof}
 Assume $\slice{\word}{i'}{j'} \in \project{\Lg(\NFA)}{\var[m]}$.
 Let $n = i' - i$.
 By appending prefixes and suffixes to~$\slice{\word}{i'}{j'}$,
 we have $\slice{\word}{i}{\cardinality{\word}} \in \Alphabet^n \cdot \project{\Lg(\NFA)}{\var[m]} \cdot \Alphabet^*$.
 Because of $\slice{\word}{i}{\cardinality{\word}} \in \slice{\word}{i}{j} \cdot \Alphabet^*$,
 we have $\big(\slice{\word}{i}{j} \cdot \Alphabet^*\big) \cap \big(\Alphabet^n \cdot \project{\Lg(\NFA)}{\var[m]} \cdot \Alphabet^*\big) \neq \emptyset$.
 By $\slice{\word}{i}{j} \in \project{\Lg(\NFA_{\loc})}{\var[m]}$,
 we have $\big(\project{\Lg(\NFA_{\loc})}{\var[m]} \cdot \Alphabet^*\big) \cap \big(\Alphabet^n \cdot \project{\Lg(\NFA)}{\var[m]} \cdot \Alphabet^*\big) \neq \emptyset$.
 By the definition of~$\SkipKMP^m$,
 $\SkipKMP^m(\loc)$ is the minimum~$n'$ satisfying
 $\big(\project{\Lg(\NFA_{\loc})}{\var[m]} \cdot \Alphabet^*\big) \cap \big(\Alphabet^{n'} \cdot \project{\Lg(\NFA)}{\var[m]} \cdot \Alphabet^*\big) \neq \emptyset$, which contradicts $\big(\project{\Lg(\NFA_{\loc})}{\var[m]} \cdot \Alphabet^*\big) \cap \big(\Alphabet^n \cdot \project{\Lg(\NFA)}{\var[m]} \cdot \Alphabet^*\big) \neq \emptyset$ because $n \in \{1,2,\dots, \SkipKMP^m(\loc) - 1\}$.
 Thus, we have $\slice{\word}{i'}{j'} \not\in \project{\Lg(\NFA)}{\var[m]}$.
 \qed{}
 \end{proof}
 \theoremKMPskipping*
 \begin{proof}
 Assume that there is
 $\tuple{(\word[1], i^1, j^1), \dots, (\word[k], i^k, j^k)} \in \Match(\NAA, \wordSet)$,
 satisfying $\word[m] = \word$ and $i^m \in \{i + 1, i + 2, \dots, i + \SkipKMP^m(\loc) - 1\}$.
 From the definition of $\Match(\NAA, \wordSet)$,
 we have $\slice{\word}{i^m}{j^m} \in \project{\Lg(\asNFA{\NAA})}{\var[m]}$---which contradicts \cref{lemma:correctness_KMP}.
 \qed{}
 \end{proof}
 \section{Additional examples}
 \subsection{Illustrating QS-style skip values}
 \begin{example}\label{example:SM}
	Consider again the NFA~$\NAA$ in \cref{figure:example}.
	We have
		$\ShortestMatching{} = 3$,
		$\ShortestMatching{1} = 2$,
		$\ShortestMatching{2} = 1$.
	Further, $\LastQS{1} = \{ a, \startend{} \}$ while $\LastQS{2} = \{ \startend{} \}$.
	For example, $\SkipQS^1(\startend{}) = \min\big\{2+1, \min \{ i \in \{1,2\}\mid \exists \word \in \Lg(\NFA).\, \text{the ($3 - i$)-th letter of $\project{\word}{\var[1]}$ is~$\startend{}$}\} \big\} = 1$.
	Finally, $\SkipQS^1(a) = 1$ and $\SkipQS^1(b) = 3$.
 \end{example}
 \subsection{Additional example: monitoring packets over a network}\label{appendix:example:network}
 \begin{figure}[tbp]
    \centering
    \begin{tikzpicture}[NFA, scale=1, yscale=1, node distance=1.5cm]
    \node[state, initial] (init) {$\loci{0}$};
    \node[state, right=of init] (p1) {$\loci{1}$};
    \node[state, right=of p1] (p2) {$\loci{2}$};
    \node[state, above left=of p2] (p3) {$\loci{3}$};
    \node[state, above right=of p2] (p4) {$\loci{4}$};
    \node[state, right=of p2] (p5) {$\loci{5}$};
    \node[state, right=of p5, final] (final) {$\locfinal$};

    \path [NFApath]
        (init) edge node[below] {$\tuple{\styleactQ{s^Q}, \var[1]}$} (p1)
        (p1) edge node[below] {$\tuple{\styleactP{s^P}, \var[2]}$} (p2)
        (p2) edge[] node[yshift=0.9em] {$\tuple{\styleactQ{Q}, \var[1]}$} (p3)
        (p3) edge[] node[] {$\tuple{\styleactP{P}, \var[2]}$} (p4)
        (p4) edge[] node[yshift=0.9em] {$\tuple{\styleactP{P}, \var[2]}$} (p2)
        (p2) edge node[below] {$\tuple{\styleactQ{e^Q}, \var[1]}$} (p5)
        (p5) edge node[below] {$\tuple{\styleactP{e^P}, \var[2]}$} (final)
    ;
    \end{tikzpicture}
    \caption{Example: matching requests and responses.}%
    \label{figure:example-requests-responses}
 \end{figure}
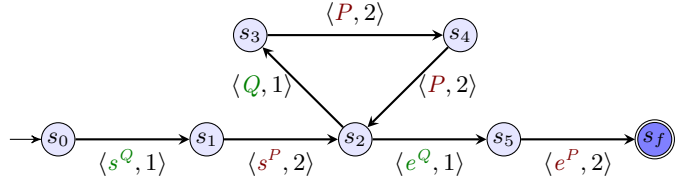
 \begin{example}\label{example:network}
    Assume a system monitors a network with requests and responses; a server is processing data streams (requests) such as video or audio streams and sends them back after processing (\eg{} transcoding or real-time decompression), not necessarily in the order they were sent.
    Notice that such an assumption is common in UDP-based streaming protocols, \eg{} Real-time Transport Protocol (RTP)~\cite{rfc1889}.
    The processing doubles the size of the requests.
    Requests (resp.~responses) are made of a start packet~$\styleactQ{s^Q}$ (resp.~$\styleactP{s^P}$), an arbitrary list of data packets~$\styleactQ{Q}$ (resp.~$\styleactP{P}$), and an end packet~$\styleactQ{e^Q}$ (resp.~$\styleactP{s^P}$).
    A possible log~$\word$ is

    \newcommand{\styleIndex}[1]{\color{gray}\scalebox{.6}{#1}}

    \[
    \begin{array}{*{30}{@{\,} c}}
    	\styleIndex{1} & \styleIndex{2} & \styleIndex{3} & \styleIndex{4} & \styleIndex{5} & \styleIndex{6} & \styleIndex{7} & \styleIndex{8} & \styleIndex{9} & \styleIndex{10} &
    	\styleIndex{11} & \styleIndex{12} & \styleIndex{13} & \styleIndex{14} & \styleIndex{15} & \styleIndex{16} & \styleIndex{17} & \styleIndex{18} & \styleIndex{19} & \styleIndex{20} &
    	\styleIndex{21} & \styleIndex{22} & \styleIndex{23} & \styleIndex{24} & \styleIndex{25} & \styleIndex{26} & \styleIndex{27} & \styleIndex{28} & \styleIndex{29} & \styleIndex{30}
    	\\
    	\styleactQ{s^Q}  & \styleactQ{Q}  & \styleactQ{e^Q}  & \styleactQ{s^Q}  & \styleactQ{Q}  & \styleactQ{Q}  & \styleactP{s^P}   & \styleactQ{e^Q}  & \styleactP{P}  & \styleactP{P}  & \styleactQ{s^Q}  & \styleactQ{Q}  & \styleactP{P}  & \styleactQ{Q}  & \styleactP{P}  & \styleactP{e^P}  & \styleactP{s^P}  & \styleactQ{Q}  & \styleactP{P}  & \styleactP{P}  & \styleactP{P}  & \styleactP{P}  & \styleactP{P}  & \styleactQ{e^Q}  & \styleactP{P}  & \styleactP{e^P}  & \styleactP{s^P}  & \styleactP{P}  & \styleactP{P}  & \styleactP{e^P}
    \end{array}
    \]

    Assuming all requests have a different size, the NAA in \cref{figure:example-requests-responses} allows to ``deanonymize'' responses, by matching them with the requests, such that the responses have a size twice as large as the request, where the self-loops to ignore irrelevant letters at $\loci{2}, \loci{3}, \loci{4}$ are omitted.
    (If sizes are not unique, then we get all \emph{potential} such matches.)
    In the aforementioned log, three requests of growing size (size~1 then~2 then~3) are sent; the match is
    \[  \big\{ \tuple{ (\word, 1, 3) , (\word, 27, 30) } , \tuple{ (\word, 4, 8) , (\word, 7, 16) , \tuple{ (\word, 11, 24) , (\word, 17, 26)  }} \big\} \text{,} \]
    \noindent{}\ie{} we can deduce that the server first processes the 2nd request (even though it is not yet fully sent, \ie{} the request is interleaved with the processed response), then the 3rd one, then the first one.
 \end{example}
 \section{Additional details for \cref{example:blowup}}\label{appendix:example:blowup}

 See \cref{figure:blowup:NAA} for the NAA corresponding to \cref{example:blowup}.

 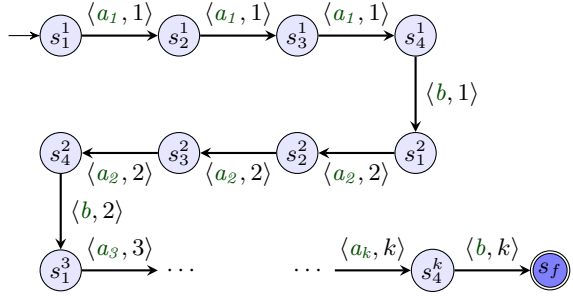
\begin{figure}[h]
    \centering
    \begin{tikzpicture}[NFA, scale=1, yscale=.5, node distance=1cm]
    \node[state, initial] (p1_1) {$\loc_1^1$};
    \node[state, right=of p1_1] (p2_1) {$\loc_2^1$};
    \node[state, right=of p2_1] (p3_1) {$\loc_3^1$};
    \node[state, right=of p3_1] (p4_1) {$\loc_4^1$};

    \node[state, below=of p4_1] (p1_2) {$\loc_1^2$};
    \node[state, left=of p1_2] (p2_2) {$\loc_2^2$};
    \node[state, left=of p2_2] (p3_2) {$\loc_3^2$};
    \node[state, left=of p3_2] (p4_2) {$\loc_4^2$};

    \node[state, below=of p4_2] (p1_3) {$\loc_1^3$};
    \node[draw=none, right=of p1_3] (p2_3) {$\cdots$};
    \node[draw=none, right=of p2_3] (p3_k) {$\cdots$};
    \node[state, right=of p3_k] (p4_k) {$\loc_4^k$};
    \node[state, final, right=of p4_k] (final) {$\locfinal$};

    \path [NFApath]
        (p1_1) edge node {$\tuple{\styleact{a_1}, \var[1]}$} (p2_1)
        (p2_1) edge node {$\tuple{\styleact{a_1}, \var[1]}$} (p3_1)
        (p3_1) edge node {$\tuple{\styleact{a_1}, \var[1]}$} (p4_1)
        (p4_1) edge node {$\tuple{\styleact{b}, \var[1]}$} (p1_2)

        (p1_2) edge node {$\tuple{\styleact{a_2}, \var[2]}$} (p2_2)
        (p2_2) edge node {$\tuple{\styleact{a_2}, \var[2]}$} (p3_2)
        (p3_2) edge node {$\tuple{\styleact{a_2}, \var[2]}$} (p4_2)
        (p4_2) edge node {$\tuple{\styleact{b}, \var[2]}$} (p1_3)

        (p1_3) edge node {$\tuple{\styleact{a_3}, \var[3]}$} (p2_3)
        (p3_k) edge node {$\tuple{\styleact{a_k}, \var[k]}$} (p4_k)
        (p4_k) edge node {$\tuple{\styleact{b}, \var[k]}$} (final)

    ;
    \end{tikzpicture}
    \caption{NAA over $k$ directions recognizing 3 consecutive occurrences of~$a_i$ followed by a~$b$, for each $1 \leq i \leq k$.}%
    \label{figure:blowup:NAA}
 \end{figure}
 \section{Details of our experiments}\label{appendix:detail_experiments}
 \subsection{Detailed analysis of the scalability}
 \begin{figure}[tb]
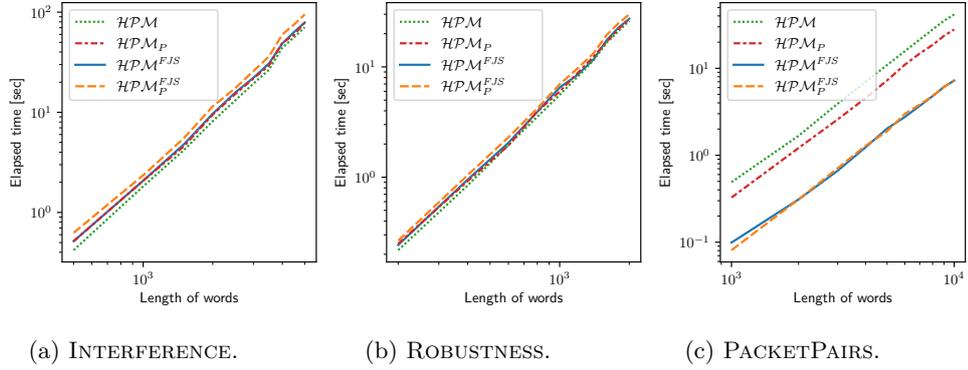

 \begin{subfigure}[b]{0.30\textwidth}
  \centering
  \scalebox{0.55}{\input{figures/interference-length-log-log.pgf}}
  \caption{\interference{}.}%
  \label{figure:results:length_time:log_log:interference}
 \end{subfigure}
 \hfill
 \begin{subfigure}[b]{0.30\textwidth}
  \centering
  \scalebox{0.55}{\input{figures/stuttering_robustness-length-log-log.pgf}}
  \caption{\robustness{}.}%
  \label{figure:results:length_time:log_log:robustness}
 \end{subfigure}
 \hfill
 \begin{subfigure}[b]{0.30\textwidth}
  \centering
  \scalebox{0.55}{\input{figures/network_pair-length-log-log.pgf}}
  \caption{\packetPair{}.}%
  \label{figure:results:length_time:log_log:packetpair}
 \end{subfigure}
 \caption{Elapsed time with respect to the length of the monitored words on a log-log scale.}
 \label{figure:results:length_time:log_log}
 \end{figure}
 \begin{figure}[tb]
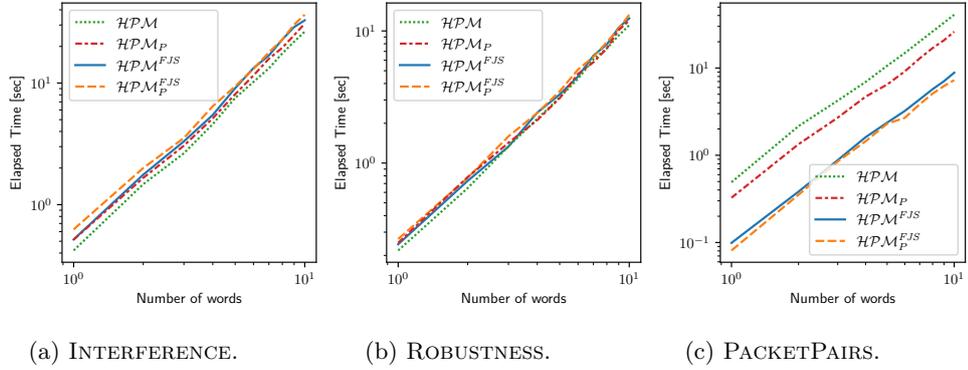

 \begin{subfigure}[b]{0.30\textwidth}
  \centering
  \scalebox{0.55}{\input{figures/interference-num-log-log.pgf}}
  \caption{\interference{}.}%
  \label{figure:results:number_time:log_log:interference}
 \end{subfigure}
 \hfill
 \begin{subfigure}[b]{0.30\textwidth}
  \centering
  \scalebox{0.55}{\input{figures/stuttering_robustness-num-log-log.pgf}}
  \caption{\robustness{}.}%
  \label{figure:results:number_time:log_log:robustness}
 \end{subfigure}
 \hfill
 \begin{subfigure}[b]{0.30\textwidth}
  \centering
  \scalebox{0.55}{\input{figures/network_pair-num-log-log.pgf}}
  \caption{\packetPair{}.}%
  \label{figure:results:number_time:log_log:packetpair}
 \end{subfigure}
 \caption{Elapsed time with respect to the number of words to be monitored on a log-log scale.}
 \label{figure:results:number_time:log_log}
 \end{figure}

 \cref{figure:results:length_time:log_log,figure:results:number_time:log_log} show the execution time of \ourTool{} with respect to the length of monitored words and the number of monitored words respectively, on a log-log scale.
 We observe that for all the benchmarks and methods, the plot on a log-log scale is approximately linear, suggesting that the execution time is polynomial with respect to the length of the monitored words.
 For each benchmark, we performed a linear regression on the log-log scale.
 The estimated slope was between 1.6 and~2.3.
 This suggests that the execution time of \ourTool{} scales approximately quadratically with respect to the length and the number of monitored words.
 This is because the initial size of the priority queue $\waitingQueue$ in \cref{algorithm:naive:initialize_waiting_queue} of \cref{algorithm:naive} increases quadratically when we have two word variables.

 \subsection{Cost of skip-value computation}
 \begin{table}[tbp]
 \centering
 \caption{Time (in \unit{\micro\second}) to construct the skip values in \ourTool{} for each of the benchmarks.}%
 \label{table:skip_value_cost}
 \begin{tabular}{lrrrr}
  \toprule
  Benchmark & \multicolumn{2}{c}{KMP} & \multicolumn{2}{c}{QS} \\
  & \algoHPMFJS{} & \algoHPMFJSP{} & \algoHPMFJS{} & \algoHPMFJSP{} \\
  \midrule
  \interference{} & 8285.60 & 8245.19 & 1481.16 & 1449.04 \\
  \robustness{} & 27189.48 & 27773.85 & 781.82 & 790.17 \\
  \packetPair{} & 4015.16 & 4005.07 & 2201.33 & 2181.25 \\
  (\manydimensions{}, 2) & 110.82 & 136.16 & 21.39 & 23.63 \\
  (\manydimensions{}, 3) & 336.50 & 322.86 & 32.42 & 26.03 \\
  (\manydimensions{}, 4) & 650.56 & 434.57 & 36.41 & 31.14 \\
  \bottomrule
 \end{tabular}
 \end{table}

 In \cref{table:skip_value_cost}, we observe that the time of the pre-computation for the skip values depends on the size of the NAA.\@
 For instance, the pre-computation of $\SkipKMP$ took about 27 milliseconds for \robustness{}, whereas it only took less than 0.1 milliseconds for (\manydimensions{}, 2).
 Nevertheless, compared with the cost of hyper pattern matching, this overhead is very small.
 Therefore, we conclude that the overhead of the pre-computation in $\algoHPMFJS$ is ignorable.
\end{AuthorVersionBlock}

\end{document}